\theoremstyle{definition}
\newtheorem{defn}{Definition}
\newtheorem{rem}[defn]{Remark}
\theoremstyle{plain}
\newtheorem{teo}[defn]{Theorem}
\newtheorem{cor}[defn]{Corollary}
\newtheorem{lem}[defn]{Lemma}
\DeclareMathOperator{\Hom}{Hom}
\DeclareMathOperator{\End}{End}
\DeclareMathOperator{\Der}{Der}
\DeclareMathOperator{\DR}{DR}
\DeclareMathOperator{\Rep}{Rep}
\DeclareMathOperator{\Mat}{Mat}
\DeclareMathOperator{\GL}{GL}
\DeclareMathOperator{\PGL}{PGL}
\DeclareMathOperator{\tr}{tr}
\newcommand{\app}[3]{#1\colon #2\to #3}
\newcommand{\bb}[1]{\mathbbm{#1}}
\newcommand{\de}{\mathrm{d}}
\newcommand{\deq}{\mathrel{:=}}
\newcommand{\falg}[2]{#1\langle #2\rangle}% free algebra over #1 generated by #2
\newcommand{\gc}[2]{[\![#1,#2]\!]}% graded commutator
\newcommand{\ld}[1]{\mathcal{L}_{#1}}% Lie derivative
\newcommand{\ol}[1]{\overline{#1}}% double
\newcommand{\pair}[2]{\langle #1,#2\rangle}% pairing
\newcommand{\bd}[1]{\mathrm{T}{#1}}% Bielawski double
\begin{document}

\title{Poisson-Nijenhuis structures on quiver path algebras}
\author{Claudio Bartocci and Alberto Tacchella}
%\subjclass[2010]{16G20;  53D17; 70H06} 
\maketitle
%\keywords{Quiver representations, Bihamiltonian structures, Integrable systems, Calogero-Moser system, Gibbons-Hermsen system.}

\begin{abstract}
  We introduce a notion of noncommutative Poisson-Nijenhuis structure on the
  path algebra of a quiver. In particular, we focus on the case when the
  Poisson bracket arises from a noncommutative symplectic form. The formalism
  is then applied to the study of the Calogero-Moser and Gibbons-Hermsen
  integrable systems. In the former case, we give a new interpretation of the
  bihamiltonian reduction performed in \cite{bfmop10}.
\end{abstract}

\section{Introduction}

Since Magri's seminal paper \cite{mag78}, the notion of bihamiltonian manifold
has played a central role in the theory of integrable systems. Some of the
most significant examples of bihamiltonian manifolds arise from a
Poisson-Nijenhuis (PN) structure. We briefly recall that a PN structure on a
differentiable manifold $M$ is a pair $(\pi_0, N)$, where $\pi_0$ is a Poisson
bivector on $M$ and $N$ is an endomorphism of the tangent bundle $TM$ whose
Nijenhuis torsion vanishes and which satisfies a suitable compatibility
condition with $\pi_0$ \cite{mm84}. With these ingredients one may introduce a
second Poisson bivector, $\pi_1= \pi_0\circ N$, such that $[\pi_0, \pi_1] =
0$, where $[\cdot, \cdot]$ is the Schouten bracket on polyvector fields. In a
number of important cases --- e.g. for the Calogero-Moser system
\cite{bfmop10} --- the manifold $M$ is a cotangent bundle, $M = T^{*}X$,
$\pi_0$ is the inverse of the canonical symplectic form on $M$, and the
recursion operator $N = \pi_1\circ \pi_0^{-1}$ turns out be the \emph{complete
  lift} of a torsionless endomorphism $\app{L}{TX}{TX}$ \cite{tur92}.

The notion of Poisson bracket has been recently generalized to a
noncommutative geometric setting along the lines of the general approach
introduced by Kontsevich \cite{kont93} and developed by Ginzburg \cite{ginz01}
and other authors in the symplectic case. In particular, a notion of
\emph{double Poisson structure} on a general associative noncommutative
algebra $A$ has been introduced by van den Bergh \cite{vdb08}. When $A$ is the
path algebra of a quiver an alternative, yet equivalent, definition has been
proposed by Bielawski in the paper \cite{bie13}, where many explicit examples
are discussed. Double Poisson structures on free associative algebras have
been studied by Odesskii, Rubtsov and Sokolov \cite{ors13}, focusing in
particular on linear and quadratic structures.

In this paper we make a further step in this direction by introducing and
studying noncommutative Poisson-Nijenhuis structures on the path algebra $A$
of a quiver $Q$. As well known, the algebra of noncommutative differential
forms on $A$ can be defined according to a universal construction valid for
any associative algebra \cite{kar86, lod92}. On the other hand, a convenient
notion of polyvector fields on $A$ has been introduced in \cite{bie13}: in
this formalism a double Poisson structure on $Q$ is equivalent to the
assignment of a bivector $\pi$ such that $[\pi, \pi]=0$ (see for details
\S\S~\ref{gen}, \ref{diff}). The delicate issue is then to devise an
appropriate definition of tensors of type $(1,1)$, in order to have
``recursion operators'' as in the commutative setting (def.~\ref{def:reg}).

Once a Poisson bivector $\pi$ and a recursion operator $N$ on the path algebra
$A$ are given, one may mimic the classical theory of PN manifolds by
noticing that all relevant results can be proved in the purely algebraic
language of Lie algebras and their deformations \cite{ksm90}. Along these
lines, we are able to obtain theorem \ref{mainth1} generalizing the result
concerning the existence of a hierarchy of compatible Poisson structures on
any PN manifold. When the Poisson bivector arises from a noncommutative
symplectic structure in the sense of \cite{ginz01}, we prove theorem
\ref{compat-sympl}, which extends the usual result for $\omega N$ manifolds.
Furthermore, we are able to recover, in our environment, the above mentioned
construction of $\omega N$ manifolds through the complete lift of an
endomorphism of the tangent bundle (\S~\ref{lifts}).

In section \ref{sec:ex} we discuss two significant applications of our
formalism, namely the noncommutative versions of the rational Calogero-Moser
system and of the Gibbons-Hermsen system. The bihamiltonian structure of the
Calogero-Moser system was first described in \cite{mm96}; more recently, a
geometric interpretation of that structure was given in \cite{bfmop10} by
means of a two-step reduction of the two Poisson bivectors. The path algebra
of the quiver with two loops provides the natural noncommutative counterpart
of Calogero-Moser phase space, as shown in \cite{ginz01}. In \S~\ref{CM} we
define a noncommutative $\omega N$ structure on this path algebra and prove
that it induces --- first on the representation spaces, then on the quotient
space --- the $\omega N$ structures used in \cite{bfmop10}.

The Gibbons-Hermsen system \cite{gh84} is a generalization of Calogero-Moser
but, up to our knowledge, no bihamiltonian structure for it is known. As a
noncommutative counterpart of the rank 2 Gibbons-Hermsen phase space we take
the path algebra of the double of the quiver \eqref{ghquiver} already studied
by Bielawski and Pidstrygach in \cite{bp08}. In \S~\ref{GH} we construct a
noncommutative $\omega N$ structure on this path algebra and obtain the
corresponding bihamiltonian hierarchy. We expect that a corresponding $\omega
N$ structure is induced not only on the representation spaces of the quiver
\eqref{ghquiver}, but also on the phase space of the system (conjecturally to
be defined along the same guidelines as in \cite{bfmop10}). Finally, in
section \ref{sec:fin} we speculate briefly about other possible developments
of the ideas presented in this paper.

In the remainder of this introduction we set up our notation for quivers and
quiver representations (for this matter our basic reference is \cite{brion}).

\subsection{Quivers and their representations}
\label{notation-quiver}

A quiver $Q$ is a finite oriented graph. We think of $Q$ as the (finite) set
of its arrows; the (finite) set of vertices of $Q$ will be denoted by $I$ and
its element will be labeled by $e_1,\dots e_n$. One has maps
$\app{h,t}{Q}{I}$ which associate to each arrow its head and tail. The double
of $Q$ is the quiver $\ol{Q}$ obtained by attaching, for each arrow $a$ in
$Q$, a dual arrow $a^\ast$ with the same endpoints but with opposite
direction, that is $t(a^\ast) = h(a)$, $h(a^\ast) = t(a)$.

Let \(\bb{k}\) be a field of characteristic zero. The path algebra $\bb{k}Q$
is the associative algebra over $\bb{k}$ generated by the paths in $Q$
(including the trivial ones) with product given by concatenation of paths
whenever is possible, zero otherwise. Clearly, the arrows $\{a\}_{a\in Q}$ and
the trivial paths, identified with the vertices $e_1,\dots e_n$, are a set of
generators for $\bb{k}Q$. If $h(a) = t(b)$, we shall write $ba$ for the
resulting concatenated path; observe that $e_{h(a)} a = a e_{t(a)} = a$ for
all $a\in Q$. 

Let $B$ denote the commutative semisimple algebra $\bigoplus_{i=1}^n
\bb{k}e_i$, where the $e_i$ are orthogonal idempotents, i.e. $e_i^2=0$ and
$e_i e_j = 0$ for $i\neq j$. There is a natural algebra embedding
$B\hookrightarrow \bb{k}Q$ which gives $\bb{k}Q$ a structure of $B$-algebra.
 
A $\bb{k}$-representation of a quiver $Q$ is a pair $(V, \tau)$, where $V=
\bigoplus_{i\in I} V_{i}$ is an $I$-graded $\bb{k}$-vector space and
$\tau=(\tau_{a})_{a\in Q}$ is a set of linear maps
$\tau_{a}\in\Hom_{\bb{k}}(V_{t(a)},V_{h(a)})$. The space of the
representations of $Q$ on $V$ will be denoted by $\Rep_{\bb{k}}({Q},V)$.

Let us write $\app{\pi_i}{V}{V_i}$ for the canonical projection onto $V_i$ and
$\app{\jmath_i}{V_i}{V}$ for the canonical immersion of $V_i$. Then each
$\tau_a$ determines an element $\tilde{\tau}_a \in \End(V)$ given by
$\tilde{\tau}_a = \jmath_a \tau_a \pi_a$; similarly, for each vertex $e_i$ we
define $\tilde{\tau}_i \in \End(V)$ as the composition $\tilde{\tau}_i =
\jmath_i \pi_i$. It is straightforward to verify that these endomorphisms
satisfy the relations
$$
\tilde{\tau}_i ^2 = \tilde{\tau}_i\,; \quad \tilde{\tau}_i \tilde{\tau}_j = 0
\ \text{for}\ i\neq j\,; \quad \tilde{\tau}_{h(a)} \tilde{\tau}_a =
\tilde{\tau}_a \tilde{\tau}_{t(a)} = \tilde{\tau}_a\,.
$$ 
The algebra $\bigoplus_{i=1}^n \bb{k} \tilde{\tau}_i$ may be identified with
$B$. Each representation $(V, \tau)$ induces a $B$-algebra homomorphism
$\bb{k}Q \to \End(V)$ defined by $a \mapsto \tilde{\tau}_a$,
$e_i \mapsto \tilde{\tau}_i$, and, conversely, each such a homomorphism
determines a representation of $Q$ on $V$. Summing up, one has an isomorphism
\begin{equation}
  \label{repr-isomo}
  \mathcal{R}\colon\Rep_{\bb{k}}({Q},V) \stackrel{\sim}{\longrightarrow}
  \Hom_{B{\textrm{-alg}}} \left(\bb{k}Q, \End(V)\right)\,.
\end{equation}
Let us fix an element $\mathbf{n}=(n_i)_{i\in I}\in\mathbb{N}^I$ and set
$\vert \mathbf{n}\vert = \sum_i n_i$. The space of representations of the
quiver $Q$ on $\bb{k}^{\vert \mathbf{n}\vert} = \bigoplus_{i\in I}
\bb{k}^{n_i}$ will denoted by $\Rep_{\bb{k}}({Q},\mathbf{n})$:
\begin{equation}
  \label{eq:decomp}
  \Rep_{\bb{k}}({Q},\mathbf{n}) = \bigoplus_{a\in Q} \Mat_{n_{h(a)}\times n_{t(a)}} (\bb{k})\,.
\end{equation}
In this case the map \eqref{repr-isomo} becomes
\begin{equation}
  \begin{aligned}
    \label{repr-isomo2}
    \mathcal{R}\colon\Rep_{\bb{k}}({Q},\mathbf{n}) &\stackrel{\sim}{\longrightarrow}
    \Hom_{B{\textrm{-alg}}} \left(\bb{k}Q, \Mat_{\vert \mathbf{n}\vert\times \vert \mathbf{n}\vert} ({\bb{k}}\right) \\
    \tau &\longmapsto \mathcal{R}(\tau)\,,
  \end{aligned}
\end{equation}
where $\mathcal{R}(\tau)(a) = \tilde{\tau}_a$ for all $a\in Q$ and
$\mathcal{R}(\tau)(e_i) = \tilde{\tau}_i$ for all trivial paths $e_i$.
Clearly, if we give a matrix $R \in \Mat_{\vert \mathbf{n}\vert\times \vert
  \mathbf{n}\vert}({\bb{k}})$ the block decomposition $R = R_{ij}$, with
$R_{ij} \in \Mat_{n_i \times n_j} ({\bb{k}})$, the only non-zero block of
$\tilde{\tau}_a$  is the $n_{h(a)}\times n_{t(a)}$ block corresponding to
$\tau_a$, and the only non-zero block of $\tilde{\tau}_i$ is the $n_i \times
n_i$ identity matrix. 

The group 
\begin{equation}
  \label{eq:gl-bfn}
  \GL_{\mathbf{n}}(\bb{k})\deq \prod_{i\in I} \GL_{n_i}(\bb{k})
\end{equation}
acts naturally on $\Rep_{\bb{k}}({Q},\mathbf{n})$ by conjugation and preserves
the decomposition \eqref{eq:decomp}. The subgroup
\[ \bb{k}^{*}I_{\mathbf{n}} = \{(\lambda I_{n_{i}})_{i\in I}\mid \lambda\in \bb{k}^{*}\} \]
is contained in the center of \(\GL_{\mathbf{n}}(\bb{k})\) and acts trivially
on \(\Rep_{\bb{k}}(Q,\mathbf{n})\). Thus the action on
\(\GL_{\mathbf{n}}(\bb{k})\) factors through an action of the group
\begin{equation}
  \label{eq:group}
  G_{\mathbf{n}}\deq \GL_{\mathbf{n}}(\bb{k})/\bb{k}^{*}I_{\mathbf{n}}\,.
\end{equation}
The isomorphism classes of representations of the quiver $Q$ with a fixed
dimension vector $\mathbf{n} = (\dim V_i)_{i\in I}$ are then in one to one
correspondence with the set of orbits of $G_\mathbf{n}$ in
$\Rep_{\bb{k}}({Q},\mathbf{n})$.

\section{Non-commutative PN structures}
\label{sec:pn}

\subsection{General setting}
\label{gen}

In order to develop a noncommutative PN formalism on quiver path algebras we
need to briefly recall some basic notions (see also \cite{nc1} for a more
pedagogical introduction).

Let $A$ be a noncommutative, associative, unital algebra over a field $\bb{k}$
of characteristic zero. The definition of the differential graded (DG) algebra
of differential forms on $A$ dates back to the classical work of A.~Connes and
M.~Karoubi in the mid 1980s \cite{kar86, lod92, cq95}. Let $\tilde A$ be the
quotient vector space $A/ \bb{k}$ and define
$$
\Omega_{\bb{k}}^r (A) = A \otimes_{\bb{k}}
\underbrace{{\tilde A}\otimes_{\bb{k}} \cdots \otimes_{\bb{k}} {\tilde A}}_{r\text{ times}}
$$ 
for any integer $r\geq 0$. The graded vector space 
$\Omega_{\bb{k}}^{\bullet} (A) = \bigoplus_{r\geq 0} \Omega_{\bb{k}}^r (A)$ is
endowed with the graded product
\begin{equation}
  \label{eq:product}
  [a_0\otimes a_1\otimes \cdots \otimes a_r] [a_{r+1}\otimes \cdots \otimes a_{s}] 
  = \sum_{i=0}^{r} (-1)^{r-i} [a_0\otimes \cdots \otimes a_i a_{i+1}\otimes
  \cdots \otimes a_s]\,,
\end{equation}
where $[a_0\otimes a_1\otimes \cdots \otimes a_r]$ is the class of $a_0\otimes
a_1\otimes \cdots \otimes a_r$ in $\Omega_{\bb{k}}^r (A)$, and with the
differential
\begin{equation}
  \label{eq:differential}
  \de [a_0\otimes a_1\otimes \cdots \otimes a_r] = [1\otimes a_0\otimes
  a_1\otimes \cdots \otimes a_r]\,.
\end{equation}
It is not difficult to show that these formulas determine the unique DG
algebra structure on $\Omega_{\bb{k}}^{\bullet} (A)$ satisfying the condition
$$
[a_0\otimes a_1\otimes \cdots \otimes a_r] = a_0 \de a_1 \cdots \de a_r\,.
$$ 
The mapping $a_0 \de a_1 \mapsto a_0\otimes a_1 - a_0 a_1\otimes 1$ yields a
natural isomorphism $\Omega_{\bb{k}}^1 (A) \stackrel{\sim}{\longrightarrow}
\ker\mu$, where $\app{\mu}{A\otimes_{\bb{k}} A}{A}$ is the multiplication
morphism. In this way $\Omega_{\bb{k}}^1 (A)$ can be given a structure of
$A$-bimodule; while the left multiplication is the obvious one, the right
multiplication is somewhat less evident: $(a_0\de a_1) a = a_0\de (a_1 a) -
a_0 a_1\de a$.

The derivation functor $\app{\Der_{\bb{k}}(A, \cdot)}{A\textrm{\bf -Bimod}}
{\textrm{\bf Vect}_{\bb{k}}}$ is representable by $\Omega_{\bb{k}}^1 (A)$. So,
for any $A$-bimodule $M$, there is an isomorphism 
\begin{equation*}
  \Der_{\bb{k}}(A, M) \stackrel{\sim} {\longrightarrow}
  \Hom_{A{\textrm{-Bimod}}} (\Omega_{\bb{k}}^1 (A), M)\,.
\end{equation*}
When $M=A$ this isomorphism induces a pairing 
\begin{equation}
  \begin{split}
    \Omega_{\bb{k}}^1 (A) \times \Der_{\bb{k}}(A,A) &\to A\\
    (\alpha, \theta) &\mapsto  i_{\theta}(\alpha)
  \end{split}
\end{equation}
Notice that, since the linear space \(\Der_{\bb{k}}(A,A)\) has no natural
structure of \(A\)-bimodule\footnote{In general, \(\Der_{\bb{k}}(A,A)\) is
  only a \(Z(A)\)-bimodule, \(Z(A)\) being the center of \(A\). For quiver
  path algebras one has \(Z(A) = \bb{k}\).}, this is just a pairing between
vector spaces over \(\bb{k}\). For any derivation $\theta\in
\Der_{\bb{k}}(A,A)$ the operation $i_{\theta}$ extends to the whole of
$\Omega_{\bb{k}}^{\bullet} (A)$:
$$
i_{\theta} (a_0 \de a_1 \cdots \de a_r) = \sum_{j=1}^r (-1)^{j-1} a_0 \de a_1
\cdots i_{\theta}(a_j)\cdots \de a_r\,.
$$
The Lie derivative $\app{\ld{\theta}}{\Omega_{\bb{k}}^{\bullet}(A)}
{\Omega_{\bb{k}}^{\bullet} (A)}$ with respect to $\theta$ may then be defined
using the Cartan formula $\ld{\theta} = \de\circ i_{\theta} + i_{\theta}
\circ \de$. It follows that any Lie derivative $\ld{\theta}$ is a degree zero
derivation of $\Omega_{\bb{k}}^{\bullet} (A)$, and the following identities
are readily verified on $\Omega_{\bb{k}}^1 (A)$ (and therefore on the whole of
$\Omega_{\bb{k}}^{\bullet} (A)$):
\begin{equation}
  \label{eq:id-ld}
  [\ld{\theta}, \ld{\eta}] = \ld{[\theta,\eta]}\,,\qquad
  [\ld{\theta}, i_\eta]= i_{[\theta,\eta]}\,,
\end{equation}
where $[X, Y] = X\circ Y - Y\circ X$ is the usual commutator of endomorphisms.

The DG algebra $\Omega_{\bb{k}}^{\bullet} (A)$ comes naturally equipped with
the graded commutator
$$
\gc{\chi}{\omega} = \chi\omega - (-1)^{\vert\chi\vert \vert\omega\vert} \omega\chi\,.
$$
The abelianization of $\Omega_{\bb{k}}^{\bullet} (A)$ is the graded vector
space
$$
\DR_{\bb{k}}^{\bullet}(A)\deq \Omega_{\bb{k}}^{\bullet}(A)/
\gc{\Omega_{\bb{k}}^{\bullet} (A)}{\Omega_{\bb{k}}^{\bullet} (A)}\,,
$$
where $\gc{\Omega_{\bb{k}}^{\bullet} (A)}{\Omega_{\bb{k}}^{\bullet} (A)}$ is
the linear subspace generated by all graded commutators. The differential
\eqref{eq:differential} descends to this quotient and so one gets a complex
$(\DR_{\bb{k}}^{\bullet}(A), \de)$, whose cohomology is, by definition, the
noncommutative de Rham cohomology of $A$. Notice that, being every element of
$A$ of degree zero, one has $\gc{A}{A} = [A,A]$ so that the degree zero term
of this complex is the linear space $\DR_{\bb{k}}^0(A) = A/[A,A]$, to be
interpreted as the space of ``regular functions'' associated to the algebra
$A$. Similarly, the degree one term is $\DR_{\bb{k}}^1(A) = \Omega_{\bb{k}}^1
(A)/[A, \Omega_{\bb{k}}^1 (A)]$.

It is easy to verify that, for any derivation $\theta\in \Der_{\bb{k}}(A,A)$,
the operations $i_{\theta}$ and $\mathcal{L}_{\theta}$ induce operations,
denoted by the same symbols, on the complex $\DR_{\bb{k}}^{\bullet}(A)$. We
can therefore define a linear pairing $\app{\pair{\cdot}{\cdot}}
{\DR^{1}_{\bb{k}}(A)\times \Der_{\bb{k}}(A,A)}{\DR^{0}_{\bb{k}}(A)}$ given by
\begin{equation}
  \label{eq:def-pair-nc}
  \pair{\alpha}{\theta} = i_{\theta}(\alpha) \mod [A,A]\,.
\end{equation}
Whenever a subalgebra $B \hookrightarrow A$ is assigned, all previous
constructions can be performed relatively to $B$. Specifically, one sets
$$
\Omega_{B}^r (A) = A \otimes_{B} \underbrace{{A/B}\otimes_B \cdots \otimes_B
  {A/B}}_{r \text{ times}}\,,\qquad  \Omega_{B}^\bullet (A) = \bigoplus_{r\geq
  0}\Omega_{B}^r (A)
$$
and checks that the formulas \eqref{eq:product}, \eqref{eq:differential}
descend to $\Omega_{B}^\bullet (A)$ and endow it with a structure a DG
algebra. The vector space $\Omega_{B}^1 (A)$ is isomorphic to the kernel of
the multiplication morphism $A\otimes_B A \to A$ (thus inheriting a structure
of $A$-bimodule) and represents the derivation functor $\app{\Der_B (A,\cdot)}
{A\textrm{\bf -Bimod}}{\textrm{\bf Vect}_{\bb{k}}}$. The relative de Rham
complex of $A$ is then defined as the quotient
$$
\DR_{B}^{\bullet}(A) = \Omega_{B}^{\bullet} (A)/ \gc{\Omega_{B}^{\bullet}
  (A)}{\Omega_{B}^{\bullet} (A)}
$$
and, as expected, one has a pairing
\begin{equation}
  \label{eq:def-pair-nc2}
  \app{\pair{\cdot}{\cdot}}{\DR^{1}_{B}(A)\times \Der_{B}(A,A)}{\DR^{0}_B (A)}\,.
\end{equation}

\subsection{Differential calculus on path algebras}
\label{diff}

From now on we shall restrict our attention to the case when $A$ is the path
algebra $\bb{k}Q$ of a quiver $Q$ and $B=\bigoplus_{i\in I} \bb{k} e_i$ is its
commutative subalgebra of idempotents. To make the notation less cumbersome,
we shall adopt the following abbreviations:
$$
\Omega^\bullet (Q)\deq \Omega^\bullet_B (\bb{k}Q)\,,\quad 
\Der(Q)\deq \Der_B(\bb{k}Q, \bb{k}Q)\,,\quad \DR^\bullet(Q)\deq
\DR^\bullet_B(\bb{k}Q)\,.
$$
Following R.~Bielawski's approach \cite{bie13}, we denote each dual arrow
$a^\ast$ of the double quiver $\ol{Q}$ by $\partial_a$ and think of it as a
fundamental noncommutative vector field. To emphasize this different
interpretation of $\ol{Q}$ we adopt a new symbol to denote it: $\bd{Q}$.

Let us consider the linear subspace ${\bb{k}\bd{Q}}^r \subset \bb{k}\bd{Q}$
generated by all the monomials $x_1\cdots x_k$ with $k\geq r$ such that
exactly $r$ of the $x_i$ are of the type $\partial_a$ for some $a\in Q$.
Obviously, one has ${\bb{k}\bd{Q}}^0 = \bb{k}Q$. The vector space
$\bb{k}\bd{Q}$ can therefore be given the grading
\begin{equation}
  \label{grading-double}
  \bb{k}\bd{Q} = \bigoplus_{r\geq 0} {\bb{k}\bd{Q}}^r\,.
\end{equation}
\begin{defn}
  The space $\mathcal{V}Q$ of \emph{noncommutative polyvector fields} on the
  quiver $Q$ is the quotient of $\bb{k}\bd{Q}$ by the relations
  \begin{equation}
    \label{relations-polyvectors}
    PR - (-1)^{pr} RP = 0\,, \quad \text{if}\ P\in {\bb{k}\bd{Q}}^p\,, R\in {\bb{k}\bd{Q}}^r\,.
  \end{equation}
\end{defn}
It is worth observing that every path which is not closed becomes zero in
$\mathcal{V}Q$; in other words, $\mathcal{V}Q$ is generated by closed paths
(``necklaces''). The grading (\ref{grading-double}) induces a grading on
$\mathcal{V}Q$, i.e. $\mathcal{V}Q = \bigoplus_{r\geq 0} \mathcal{V}^r Q$.
Notice that $\mathcal{V}^0 Q = \DR^0(Q)$. As for $\mathcal{V}^1Q$, its
elements can be written in the canonical form
\begin{equation}
  \label{can-form-deriv}
  \theta = \sum_{a\in Q} p_{a}\partial_{a}\,, \quad \text{with }
  p_a\in \bb{k}Q,\  e_{h(a)}p_a = p_a,\ p_ae_{t(a)} = p_a\,.
\end{equation}
\begin{lem}
  \label{der-v1}
  There is a canonical isomorphism $\mathcal{V}^1Q \simeq \Der(Q)$.
\end{lem}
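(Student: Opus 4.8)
The plan is to make the isomorphism explicit, identifying both $\Der(Q)$ and $\mathcal{V}^1Q$ with the vector space $\bigoplus_{a\in Q}e_{h(a)}\,\bb{k}Q\,e_{t(a)}$. First I deal with $\Der(Q)$. Any $B$-relative derivation $D\in\Der(Q)$ vanishes on $B$, so in particular $D(e_i)=0$ for every vertex; applying the Leibniz rule to $a=e_{h(a)}\,a\,e_{t(a)}$ then gives $D(a)=e_{h(a)}\,D(a)\,e_{t(a)}\in e_{h(a)}\bb{k}Q\,e_{t(a)}$ for each arrow $a$. Since the arrows together with the idempotents generate $\bb{k}Q$ as a $\bb{k}$-algebra, the Leibniz rule also shows that $D$ is completely determined by the family $\bigl(D(a)\bigr)_{a\in Q}$. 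Conversely, because the paths form a $\bb{k}$-basis of $\bb{k}Q$, any prescription $a\mapsto p_a\in e_{h(a)}\bb{k}Q\,e_{t(a)}$, $e_i\mapsto 0$, extends through the obvious Leibniz formula on paths to a bona fide element of $\Der(Q)$ (the derivation identity being a routine check). Hence $D\mapsto\bigl(D(a)\bigr)_{a\in Q}$ is a linear isomorphism $\Der(Q)\stackrel{\sim}{\longrightarrow}\bigoplus_{a\in Q}e_{h(a)}\bb{k}Q\,e_{t(a)}$.

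On the $\mathcal{V}^1Q$ side I would use the canonical form \eqref{can-form-deriv}. Consider $\Psi\colon\Der(Q)\to\mathcal{V}^1Q$, $\Psi(D)=\sum_{a\in Q}D(a)\,\partial_a$; this is well defined because the conditions $e_{h(a)}D(a)=D(a)=D(a)e_{t(a)}$ are exactly what makes each $D(a)\,\partial_a$ a linear combination of closed paths of $\bd{Q}$ with a single occurrence of $\partial_a$, i.e. a genuine element of $\mathcal{V}^1Q$. Surjectivity of $\Psi$ is immediate from the existence of the canonical form: for $\theta=\sum_a p_a\partial_a$ the derivation $D$ with $D(a)=p_a$ obeys $\Psi(D)=\theta$. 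Injectivity is equivalent to the \emph{uniqueness} of the canonical form, i.e. to the linear independence in $\mathcal{V}^1Q$ of the elements $p\,\partial_a$ with $a\in Q$ and $p$ a path from $t(a)$ to $h(a)$ in $Q$; granting this, $\Psi(D)=0$ forces $D(a)=0$ for all $a$, hence $D=0$. The inverse $\Phi\deq\Psi^{-1}\colon\mathcal{V}^1Q\stackrel{\sim}{\longrightarrow}\Der(Q)$ sends $\sum_a p_a\partial_a$ to the derivation acting on arrows by $a\mapsto p_a$; it involves no choices and matches Bielawski's interpretation of $\partial_a$ as a fundamental (``partial-derivative'') vector field, so it is the asserted canonical isomorphism.

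The only real work is the uniqueness of the canonical form, which is a matter of unravelling the quotient defining $\mathcal{V}Q$ in degree one. There the relations \eqref{relations-polyvectors} reduce to the commutators $xy-yx$ with $x\in\bb{k}Q$ of degree $0$ and $y\in{\bb{k}\bd{Q}}^1$, so $\mathcal{V}^1Q={\bb{k}\bd{Q}}^1/[\bb{k}Q,{\bb{k}\bd{Q}}^1]$ (with $[\,\cdot\,,\cdot\,]$ the span of such commutators) is a necklace-type quotient. From this one reads off that: (i) a monomial $m\in{\bb{k}\bd{Q}}^1$ whose head $i$ differs from its tail dies, since $m=[e_i,m]$; (ii) two closed monomials are identified iff one is a cyclic rotation of the other, and these are the only identifications, so ${\bb{k}\bd{Q}}^1/[\bb{k}Q,{\bb{k}\bd{Q}}^1]$ has as a basis the cyclic classes of closed monomials containing exactly one $\partial$; (iii) since $\partial_a$ occurs exactly once, each such class has a unique representative with $\partial_a$ as its last letter, necessarily of the form $p\,\partial_a$ with $p$ a path from $t(a)$ to $h(a)$. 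Therefore these elements form a $\bb{k}$-basis of $\mathcal{V}^1Q$, which is precisely the required independence. (Alternatively, one sidesteps all of this: $\Der_B(\bb{k}Q,-)$ is represented by $\Omega^1_B(\bb{k}Q)$, and the latter is the free $\bb{k}Q$-bimodule on the arrows, so a bimodule morphism from it to $\bb{k}Q$ amounts to choosing one element of $e_{h(a)}\bb{k}Q\,e_{t(a)}$ per arrow; this gives $\Der(Q)\cong\bigoplus_{a}e_{h(a)}\bb{k}Q\,e_{t(a)}\cong\mathcal{V}^1Q$ at one stroke.)
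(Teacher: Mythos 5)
Your proof is correct and takes essentially the same route as the paper: the canonical isomorphism is the same in both cases, namely $\sum_a p_a\partial_a \leftrightarrow$ the $B$-linear derivation with $a\mapsto p_a$, $e_i\mapsto 0$, which is exactly the paper's one-line argument based on the canonical form \eqref{can-form-deriv}. The additional work you do --- checking that a derivation is freely determined by its values on the arrows, and establishing uniqueness of the canonical form via the cyclic-word (necklace) basis of $\mathcal{V}^1Q$ --- just supplies, correctly, the verifications that the paper's proof leaves implicit.
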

\begin{proof}
  Each element $\theta\in \mathcal{V}^1Q$ of the form \eqref{can-form-deriv}
  uniquely determines a \(B\)-linear derivation $A\to A$ defined by mapping
  each arrow $a$ to the path $p_a$ and each idempotent $e_{i}$ to zero.
\end{proof}
A canonical form is also available (see e.g. \cite{blb02}) for every 1-form
$\alpha \in \DR^1(Q)$:
\begin{equation}
  \label{can-form-1forms}
  \alpha = \sum_{a\in Q} r_{a}\de a\,, \quad \text{with }
  r_a\in \bb{k}Q,\  e_{t(a)}r_a = r_a,\ r_ae_{h(a)} = r_a\,.
\end{equation}
Using expressions \eqref{can-form-deriv} and \eqref{can-form-1forms} the
pairing $\app{\pair{\cdot}{\cdot}}{\DR^1(Q) \times \Der(Q)}{\DR^0(Q)}$
introduced in equation \eqref{eq:def-pair-nc2} becomes simply
\begin{equation}
  \label{eq:def-pair-nc3}
  \pair{\alpha}{\theta} = \sum_{a\in Q} r_{a}p_{a}\,.
\end{equation}
This pairing is ``perfect'' in the sense that $\pair{\de a}{\partial_{b}} =
\delta_{ab}$ (but notice that both \(\Der(Q)\) and \(\DR^{1}(Q)\) are actually
infinite-dimensional linear spaces over \(\bb{k}\)).

The space $\mathcal{V}Q$ of noncommutative polyvector fields can be endowed
with a Schouten bracket \cite{bie13, pich08, laz05}. For any arrow $y\in
\bd{Q}$ and for any monomial $x_1\cdots x_N$, with $x_i \in {\bb{k}} \bd{Q}$,
let
\begin{equation}
  D_y(x_1\cdots x_N) = \sum_{x_i = y} (-1)^{n_i m_i} x_{i+1}\cdots x_N x_{1}\cdots x_{i-1}\,,
\end{equation}
where $n_i$ (resp.~$m_i$) is the number of dual arrows $\partial_a$ among the
elements $x_1, \dots, x_i$ (resp.~among $x_{i+1}, \dots, x_N$). This operation
can be extended linearly to the whole of ${\bb{k}}\bd{Q}$, so obtaining a
directional superderivative
$$
\app{D_y}{\mathcal{V}Q}{\bb{k}}\bd{Q}\,.
$$
\begin{defn}
  \label{schouten}
  Given $\lambda \in \mathcal{V}^p Q$, $\xi \in \mathcal{V}^q Q$, their
  \emph{Schouten bracket} $[\lambda, \xi]$ is defined by the formula
  $$
  [\lambda, \xi] = \sum_{a\in Q} D_{\partial_a}(\lambda) D_{a}(\xi) -
  (-1)^{(p+1)(q+1)} D_{\partial_a}(\xi) D_{a}(\lambda) \qquad\text{modulo
    relations \eqref{relations-polyvectors}.}
  $$
\end{defn}
For any $\lambda \in \mathcal{V}^p Q$, $\xi \in \mathcal{V}^q Q$, $\sigma\in
\mathcal{V}^r Q$, the following properties hold true:
\begin{enumerate}
\item[1)] $[\lambda, \xi] \in \mathcal{V}^{p+q-1} Q$;
\item[2)] $[\lambda, \xi] = -  (-1)^{(p+1)(q+1)}[\xi, \lambda]$;
\item[3)] (graded Jacobi identity)
  $$ [\lambda, [\xi, \sigma]] + (-1)^{(p+1)(q+1)} [\xi, [\sigma, \lambda]] + (-1)^{(q+1)(r+1)} [\sigma, [\lambda, \xi]] = 0\,.$$
\end{enumerate}
Let us now consider, for a given dimension vector $\mathbf{n}$, the
representation space $\Rep_{\bb{k}} ({Q},\mathbf{n})$ of the quiver $Q$ and
denote its space of $G_{\mathbf{n}}$-invariant differential forms by
$\Omega^{\bullet}(\Rep_{\bb{k}} ({Q},\mathbf{n}))^{G_{\mathbf{n}}}$ and that
of $G_{\mathbf{n}}$-invariant ordinary polyvector fields by
$\mathcal{V}(\Rep_{\bb{k}} ({Q},\mathbf{n}))^{G_{\mathbf{n}}}$ (the group
$G_{\mathbf{n}}$ is defined in eq.~\eqref{eq:group}). The space
$\mathcal{V}(\Rep_{\bb{k}} ({Q},\mathbf{n}))^{G_{\mathbf{n}}}$ comes equipped
with the bracket induced by the usual Schouten bracket on the space
$\mathcal{V}(\Rep_{\bb{k}} ({Q},\mathbf{n}))$, namely
\begin{multline*}
  [X_1 \wedge \cdots \wedge X_p, Y_1\wedge \cdots\wedge Y_q] = \\
  = \sum_{i,j} (-1)^{i+j} [X_i, Y_j] \wedge X_1\wedge \cdots X_{i-1}\wedge X_{i+1}\wedge \cdots \wedge X_p\wedge
Y_1\wedge\cdots \wedge Y_{j-1}\wedge Y_{j+1}\wedge\cdots \wedge Y_q\,.
\end{multline*}
\begin{teo}
  \label{teor-discesa}
  Let $\Rep_{\bb{k}}({Q},\mathbf{n})$ be a representation space for the quiver $Q$.
  \begin{enumerate}
  \item[1)] There is a morphism of graded $B$-algebras
    $$
    \app{\mathbf{\hat{}}\ }{\DR^{\bullet}(Q)}{ \Omega^\bullet(\Rep_{\bb{k}}
      ({Q},\mathbf{n}))^{G_{\mathbf{n}}}}
    $$
    which commutes with the respective differentials;
  \item[2)] there is a morphism of graded $B$-algebras   
    $$
    \app{\mathbf{\check{}}\ }{\mathcal{V}Q}{\mathcal{V}(\Rep_{\bb{k}}
      ({Q},\mathbf{n}))^{G_{\mathbf{n}}}}
    $$
    which commutes with the respective Schouten brackets;
  \item[3)] for every $\alpha \in {\DR^{1}(Q)}$ and $\theta\in {\mathcal{V}}^1Q$ one has
    $$
    \widehat{\pair{\alpha}{\theta}} = \pair{\hat{\alpha}}{\check{\theta}}\,.
    $$
  \end{enumerate}
\end{teo}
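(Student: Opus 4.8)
The plan is to construct the two maps explicitly on generators, check they are well-defined on the quotients defining $\DR^\bullet(Q)$ and $\mathcal{V}Q$, and then verify compatibility with differentials, brackets and the pairing by reducing everything to the defining formulas.

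First I would make the universal-trace construction precise. Recall from \eqref{repr-isomo2} that a point of $\Rep_{\bb{k}}(Q,\mathbf{n})$ is a $B$-algebra map $\bb{k}Q\to\Mat_{|\mathbf{n}|}(\bb{k})$; equivalently there is a tautological $B$-algebra homomorphism
\[
\rho\colon \bb{k}Q \longrightarrow \Mat_{|\mathbf{n}|}\bigl(\mathcal{O}(\Rep_{\bb{k}}(Q,\mathbf{n}))\bigr)\,,
\]
sending each arrow $a$ to the matrix of coordinate functions on the corresponding block $\Mat_{n_{h(a)}\times n_{t(a)}}(\bb{k})$. For a closed path $w$ (a necklace) the scalar $\operatorname{tr}\rho(w)\in\mathcal{O}(\Rep_{\bb{k}}(Q,\mathbf{n}))$ is $G_{\mathbf{n}}$-invariant and depends only on the cyclic word, hence only on the class of $w$ in $\DR^0(Q)=\bb{k}Q/[\bb{k}Q,\bb{k}Q]$; this defines $\widehat{(\cdot)}$ in degree $0$. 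In higher degrees one extends $\rho$ to a map of DG algebras $\Omega^\bullet_B(\bb{k}Q)\to\Omega^\bullet(\Rep)$ by $\rho(\de a)=\de\rho(a)$ (this is legitimate because $\Omega^1_B$ represents the relative derivation functor, and $\de\circ\rho$ is such a derivation), and then postcompose with the matrix trace $\operatorname{tr}\colon\Mat_{|\mathbf{n}|}(\Omega^\bullet(\Rep))\to\Omega^\bullet(\Rep)$; cyclicity of the trace kills graded commutators, so the composite factors through $\DR^\bullet(Q)$ and lands in the $G_{\mathbf{n}}$-invariants. Commutation with $\de$ is immediate since $\de$ on $\Omega^\bullet(\Rep)$ is a derivation commuting with $\operatorname{tr}$ and $\widehat{\de a}=\de\hat a$ by construction; $B$-linearity and multiplicativity are checked on the generators $e_i,a,\de a$ directly.

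For part (2) I would use the canonical forms \eqref{can-form-deriv}: a noncommutative vector field $\theta=\sum_a p_a\partial_a$ is assigned the ordinary vector field $\check\theta$ on $\Rep$ whose value on the $a$-block coordinate matrix is $\rho(p_a)$ — equivalently, $\check\theta$ is the derivation of $\mathcal{O}(\Rep)$ extending $\rho(a)_{ij}\mapsto \rho(p_a)_{ij}$. One checks directly that $\check\theta$ is $G_{\mathbf{n}}$-invariant (because $p_a$ satisfies the idempotent conditions $e_{h(a)}p_ae_{t(a)}=p_a$, so $\rho(p_a)$ transforms like the $a$-block). On higher polyvectors, $\mathcal{V}Q$ is generated as a graded commutative algebra over $\mathcal{V}^0Q$ by $\mathcal{V}^1Q$, so one extends $\check{(\cdot)}$ multiplicatively, $\widecheck{\theta_1\cdots\theta_p}=\check\theta_1\wedge\cdots\wedge\check\theta_p$, which is forced to be well-defined precisely by the sign relations \eqref{relations-polyvectors}; one must double-check that $\check{(\cdot)}$ restricted to $\mathcal{V}^0Q=\DR^0(Q)$ agrees with $\widehat{(\cdot)}$ in degree $0$, which it does by the trace formula. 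The content of (2) is then that the Schouten bracket of Definition~\ref{schouten} maps to the Gerstenhaber–Schouten bracket on $\mathcal{V}(\Rep)^{G_{\mathbf{n}}}$; by graded Leibniz on both sides it suffices to verify this for pairs drawn from $\mathcal{V}^0\cup\mathcal{V}^1$. For $\theta\in\mathcal{V}^1Q$ and $f\in\mathcal{V}^0Q$, $[\theta,f]$ is the derivation $\theta$ applied to $f$; unwinding the directional superderivative $D_{\partial_a}$ shows that $D_{\partial_a}(\theta)D_a(f)$ is exactly the "partial derivative of the necklace $f$ in the arrow $a$, plugged into $p_a$", and applying $\rho$ and $\operatorname{tr}$ turns this into the classical directional derivative of $\operatorname{tr}\rho(f)$ along $\check\theta$ — this is a bookkeeping computation with cyclic words and the chain rule for matrix entries. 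For $\theta,\eta\in\mathcal{V}^1Q$ one similarly matches $[\theta,\eta]$ with the commutator of the corresponding derivations of $\mathcal{O}(\Rep)$.

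Part (3) is then essentially a corollary: with $\alpha=\sum_a r_a\,\de a$ and $\theta=\sum_a p_a\partial_a$ one has $\pair{\alpha}{\theta}=\sum_a r_ap_a\bmod[\bb{k}Q,\bb{k}Q]$ by \eqref{eq:def-pair-nc3}, while on the representation side $\pair{\hat\alpha}{\check\theta}=i_{\check\theta}\hat\alpha$; since $\hat\alpha=\sum_a\operatorname{tr}\bigl(\rho(r_a)\,\de\rho(a)\bigr)$ and $\check\theta$ acts on $\rho(a)$ as $\rho(p_a)$, contracting gives $\sum_a\operatorname{tr}\bigl(\rho(r_a)\rho(p_a)\bigr)=\operatorname{tr}\rho\bigl(\sum_a r_ap_a\bigr)=\widehat{\pair{\alpha}{\theta}}$, using only cyclicity of the trace. \textbf{The main obstacle} I anticipate is part (2): getting the signs in the Schouten-bracket computation to match the signs in the Gerstenhaber bracket on ordinary polyvectors, and more fundamentally checking that $\check{(\cdot)}$ is well-defined — i.e. that the relations \eqref{relations-polyvectors} are exactly what is needed for $\theta_1\cdots\theta_p\mapsto\check\theta_1\wedge\cdots\wedge\check\theta_p$ to descend, and that it is compatible with multiplication by $\mathcal{V}^0Q$-elements. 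The differential-form statement (1) and the pairing statement (3) are comparatively routine once the trace-of-$\rho$ formalism is set up.
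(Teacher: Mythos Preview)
Your treatment of parts (1) and (3) is correct and is essentially the standard trace-of-the-tautological-representation argument; the paper itself does not give these details but simply cites \cite[\S 12.6]{ginzlect} for (1), \cite{bie13} for (2), and declares (3) immediate from Lemma~\ref{der-v1}. So for (1) and (3) you are filling in what the paper omits, and soundly.

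Part (2), however, has a real gap. You propose to define $\check{(\cdot)}$ on $\mathcal{V}^{1}Q$ and then ``extend multiplicatively'' via $\widecheck{\theta_{1}\cdots\theta_{p}}=\check{\theta}_{1}\wedge\cdots\wedge\check{\theta}_{p}$, relying on the claim that $\mathcal{V}Q$ is generated over $\mathcal{V}^{0}Q$ by $\mathcal{V}^{1}Q$. This fails on two counts. First, the generation claim is false: take $Q$ with two vertices and arrows $a\colon 1\to 2$, $b\colon 2\to 1$; then $\partial_{a}\partial_{b}\in\mathcal{V}^{2}Q$ is a nonzero cyclic word containing no ordinary arrow, whereas every nonzero element of $\mathcal{V}^{1}Q$ contains at least one, so no product of such elements can yield $\partial_{a}\partial_{b}$. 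Second, the map is not multiplicative even where factorizations exist: already in degree zero, for a single loop $a$ one has $\widehat{a^{2}}=\tr X^{2}\neq(\tr X)^{2}=(\hat{a})^{2}$, so ``extending multiplicatively'' does not reproduce the correct map. (The phrase ``morphism of graded $B$-algebras'' in the statement should not be read as multiplicativity on concatenated necklaces.)

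The construction in \cite{bie13} that the paper is invoking proceeds differently: one extends $\rho$ to all of $\bb{k}\bd{Q}$ by sending $\partial_{a}$ to the matrix $(\partial/\partial(X_{a})_{ji})_{ij}$ of coordinate vector fields (note the transposed indices), landing in $\Mat_{|\mathbf{n}|}$ with entries in the graded-commutative algebra $\mathcal{V}(\Rep_{\bb{k}}(Q,\mathbf{n}))$, and then takes the ordinary trace. For matrices $A,B$ with homogeneous entries of degrees $p,q$ in a graded-commutative ring one has $\tr(AB)=(-1)^{pq}\tr(BA)$, which is precisely relation~\eqref{relations-polyvectors}, so the composite factors through $\mathcal{V}Q$; $G_{\mathbf{n}}$-invariance is again conjugation-invariance of $\tr$. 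Compatibility with the Schouten brackets is then checked directly on cyclic monomials. Your instinct that (2) is where the work lies was right, but the map has to be built by this global trace formula rather than by piecing together degree-one factors.
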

\begin{proof}
  A proof of (1) can be found in V.~Ginzburg's \emph{Lectures}
  \cite[\S 12.6]{ginzlect} in the case of a general noncommutative associative
  algebra. Point (2) is proved in \cite{bie13}; point (3) is then
  straightforward in view of lemma \ref{der-v1}.
\end{proof}
The formalism we have set up makes it natural to state the following
definition \cite{bie13}.
\begin{defn}
  A \emph{double Poisson structure} on $Q$ is a noncommutative bivector $\pi
  \in \mathcal{V}^2Q$ such that $[\pi, \pi]=0$.
\end{defn}
As an immediate consequence of theorem \ref{teor-discesa} we get the following
result, which is crucial for the applications we shall describe in section
\ref{sec:ex}.
\begin{cor}
  If $\pi$ is a double Poisson structure on $Q$ then $\check{\pi}$ is a
  Poisson structure on each representation space \(\Rep_{\bb{k}}(Q,\mathbf{n})\).
\end{cor}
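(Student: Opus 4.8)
The plan is to derive the corollary directly from Theorem~\ref{teor-discesa}, observing that essentially everything has already been done there. First I would recall that part~(2) of the theorem gives a morphism of graded $B$-algebras $\check{\ }\colon \mathcal{V}Q \to \mathcal{V}(\Rep_{\bb{k}}(Q,\mathbf{n}))^{G_{\mathbf{n}}}$ which commutes with the respective Schouten brackets. In particular $\check{\ }$ is linear and sends $\mathcal{V}^2Q$ into the degree-$2$ part, so $\check{\pi}$ is a genuine (ordinary, $G_{\mathbf{n}}$-invariant) bivector field on $\Rep_{\bb{k}}(Q,\mathbf{n})$.

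The key step is then the compatibility with brackets: since $\check{\ }$ intertwines the noncommutative Schouten bracket on $\mathcal{V}Q$ with the usual Schouten bracket on polyvector fields on the representation space, we have
\begin{equation*}
  [\check{\pi}, \check{\pi}] = \widecheck{[\pi,\pi]}\,.
\end{equation*}
By hypothesis $\pi$ is a double Poisson structure, i.e.\ $[\pi,\pi] = 0$ in $\mathcal{V}^3Q$, and applying the linear map $\check{\ }$ gives $\widecheck{[\pi,\pi]} = 0$. Hence $[\check{\pi},\check{\pi}] = 0$, which is exactly the statement that $\check{\pi}$ defines a Poisson bivector on $\Rep_{\bb{k}}(Q,\mathbf{n})$ in the classical sense; equivalently, the bracket $\{f,g\} = \langle \de f \wedge \de g, \check{\pi}\rangle$ on $\mathcal{O}(\Rep_{\bb{k}}(Q,\mathbf{n}))$ satisfies the Jacobi identity because the vanishing of the Schouten self-bracket of a bivector is well known to be equivalent to the Jacobi identity for the associated bracket.

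In truth there is no real obstacle here, as the corollary is a formal consequence of Theorem~\ref{teor-discesa}(2); the only point deserving a word of care is that $\check{\ }$ is \emph{not} surjective, so one is not claiming that every Poisson structure on the representation space arises this way — merely that the image of a double Poisson structure is one. One should also note, for completeness, that the $G_{\mathbf{n}}$-invariance of $\check{\pi}$ means the induced Poisson bracket descends to the invariant functions $\mathcal{O}(\Rep_{\bb{k}}(Q,\mathbf{n}))^{G_{\mathbf{n}}}$, and hence (on the smooth locus) to the GIT quotient, which is what makes the result useful for the applications in Section~\ref{sec:ex}. I would therefore keep the proof to two or three lines, citing Theorem~\ref{teor-discesa}(2) for the bracket-preserving property and the standard Schouten-bracket characterization of Poisson bivectors.
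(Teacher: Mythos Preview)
Your argument is correct and matches the paper's own treatment: the paper states the corollary as ``an immediate consequence of theorem~\ref{teor-discesa}'' without further proof, and your derivation via part~(2) --- that $\check{\ }$ preserves Schouten brackets, hence $[\check{\pi},\check{\pi}] = \widecheck{[\pi,\pi]} = 0$ --- is exactly the intended one-line justification.
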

The previous corollary has a converse, which shows that a double Poisson
structure on $Q$ is completely determined by the family of all induced Poisson
structures on the representation spaces of $Q$.
\begin{teo}[Theorem 3.9 in \cite{bie13}]
  \label{teo:bie}
  Let \(\pi\in \mathcal{V}^{2}Q\). If $\check{\pi}$ is a Poisson structure on
  all representation spaces \(\Rep_{\bb{k}}(Q,\mathbf{n})\) then \(\pi\) is a
  double Poisson structure on \(\bb{k}Q\).
\end{teo}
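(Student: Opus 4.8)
The plan is to prove the converse of the corollary by exploiting the faithfulness of the family of evaluation maps $\mathcal{R}(\tau)\colon \bb{k}Q \to \Mat_{|\mathbf{n}|}(\bb{k})$ as $\mathbf{n}$ ranges over all dimension vectors. The hypothesis states that $[\check{\pi}, \check{\pi}] = 0$ in $\mathcal{V}(\Rep_{\bb{k}}(Q,\mathbf{n}))^{G_{\mathbf{n}}}$ for every $\mathbf{n}$, and by point (2) of theorem \ref{teor-discesa} the map $\check{\ }$ intertwines the Schouten brackets, so $\widecheck{[\pi,\pi]} = [\check{\pi},\check{\pi}] = 0$ for every $\mathbf{n}$. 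The whole argument therefore reduces to the following separation statement: if $\mu \in \mathcal{V}^3 Q$ satisfies $\check{\mu} = 0$ as a polyvector field on $\Rep_{\bb{k}}(Q,\mathbf{n})$ for all $\mathbf{n}$, then $\mu = 0$ in $\mathcal{V}Q$. Granting this, applying it to $\mu = [\pi,\pi] \in \mathcal{V}^3 Q$ gives $[\pi,\pi]=0$, which is exactly the assertion that $\pi$ is a double Poisson structure.

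To establish the separation statement I would write $\mu$ in a normal form with respect to the basis of $\mathcal{V}Q$ given by necklaces decorated with vector-field slots. Concretely, a general element of $\mathcal{V}^3 Q$ is a $\bb{k}$-linear combination of classes of monomials $w_0 \partial_{a_1} w_1 \partial_{a_2} w_2 \partial_{a_3} w_3$ with the $w_i$ paths in $Q$ and the whole word closed, taken modulo the graded-cyclicity relations \eqref{relations-polyvectors}; using cyclicity one may fix, say, that the word begins at a chosen vertex and begins with $\partial_{a_1}$. Under $\check{\ }$, each such generator becomes an explicit polyvector field on $\Rep_{\bb{k}}(Q,\mathbf{n})$ whose coefficient, in the coordinate frame $\partial/\partial (\tau_a)_{k\ell}$, is a product of matrix entries of the $\tilde\tau$'s read off from the $w_i$. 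The key point is that these matrix-entry polynomials, for $\mathbf{n}$ large enough, are linearly independent: distinct necklaces give rise to distinct monomials in the matrix entries, and trace-type relabelings are precisely the cyclicity relations already quotiented out. So a vanishing $\check{\mu}$ forces all coefficients of $\mu$ to vanish.

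A cleaner route to the same conclusion — and the one I would actually write up — is to factor $\check{\ }$ through the universal representation. By the isomorphism \eqref{repr-isomo2}, specifying $\check{\pi}$ on $\Rep_{\bb{k}}(Q,\mathbf{n})$ is equivalent to specifying its image under the tautological $B$-algebra map $\bb{k}Q \to \Mat_{|\mathbf{n}|}(\bb{k})$ evaluated at the generic representation; and the trace pairing identifies $G_{\mathbf{n}}$-invariant polyvector fields with elements built from the trace algebra. One then invokes the classical fact (Procesi; see also the discussion around trace identities) that the only identities satisfied by $|\mathbf{n}| \times |\mathbf{n}|$ matrices for \emph{all} $\mathbf{n}$ are the consequences of cyclicity of trace — equivalently, that the map sending a necklace to its trace function on $\Rep_{\bb{k}}(Q,\mathbf{n})$ is injective in the limit over $\mathbf{n}$. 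Applying this to the coefficients appearing in $\check{\mu}$ yields $\mu = 0$.

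The main obstacle is the separation/injectivity input: one must argue carefully that no nontrivial cancellation among decorated necklaces can occur simultaneously for \emph{every} dimension vector, i.e. that the graded-cyclic relations \eqref{relations-polyvectors} are \emph{exactly} the kernel of the family $(\check{\ })_{\mathbf{n}}$ on $\mathcal{V}^3 Q$, not merely contained in it. For the $r=0$ piece this is the statement that $\DR^0(Q) = \bb{k}Q/[\bb{k}Q,\bb{k}Q]$ injects into $\bigoplus_{\mathbf{n}} \bb{k}[\Rep_{\bb{k}}(Q,\mathbf{n})]^{G_{\mathbf{n}}}$ via trace functions, which is standard; the higher-degree case is handled by the same combinatorics after stripping off the derivative slots, since $D_{\partial_a}$ and $D_a$ act compatibly under $\check{\ }$ (this is built into point (2) of theorem \ref{teor-discesa}). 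Everything else — the chain $\widecheck{[\pi,\pi]} = [\check\pi,\check\pi] = 0$ and the reduction to separation — is formal.
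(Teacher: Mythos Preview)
The paper does not supply its own proof of this statement: it is quoted as Theorem~3.9 of \cite{bie13} and left without argument, so there is nothing in the text to compare your proposal against. That said, your outline is sound and is the natural route. The reduction $\widecheck{[\pi,\pi]}=[\check{\pi},\check{\pi}]=0$ via part~(2) of theorem~\ref{teor-discesa} is immediate, and the remaining content is exactly the separation statement you isolate: that the family of maps $\check{\ }\colon \mathcal{V}^{\bullet}Q \to \mathcal{V}(\Rep_{\bb{k}}(Q,\mathbf{n}))^{G_{\mathbf{n}}}$, taken over all $\mathbf{n}$, is jointly injective. For degree~$0$ this is the classical fact that necklaces inject into trace functions (no universal trace identities beyond cyclicity), and your observation that the higher-degree case reduces to this after accounting for the derivative slots is correct. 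The one place to be careful in a full write-up is making precise the ``stripping off the derivative slots'' step---i.e.\ exhibiting, for each fixed triple $(a_{1},a_{2},a_{3})$ of arrows carrying the $\partial$'s, enough test evaluations on $\Rep_{\bb{k}}(Q,\mathbf{n})$ to separate the path coefficients---but this is routine once $\mathbf{n}$ is taken large relative to the lengths of the words appearing in $\mu$.
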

In view of the sequel we need to clarify the relationship between
noncommutative bivectors, that is elements of \(\mathcal{V}^{2}Q\), and linear
maps \(\DR^{1}(Q)\to \Der(Q)\).

Without loss of generality we can write a bivector \(\pi\in \mathcal{V}^{2}Q\)
as a sum of ordinary (i.e. not graded) commutators of the form
\[ \pi = \sum_{a,b\in Q} \sum_{j\in J} [P^{ab}_{j}\partial_{a},R^{ab}_{j}\partial_{b}]\,, \]
where \(J\) is some finite set and \(P^{ab}_{j}\), \(R^{ab}_{j}\) are paths in
the quiver \(Q\) such that each resulting monomial is a closed path in \(\bd{Q}\). 
We define the corresponding map \(\app{\tilde{\pi}}{\DR^{1}(Q)}{\Der(Q)}\) in
the following way: given \(\alpha\in \DR^{1}(Q)\), let \(\sum_{c\in Q}
S_{c}\de c\) be a representative for \(\alpha\) in canonical form. Then
\[ \tilde{\pi}(\alpha)\deq \sum_{a,b\in Q} \sum_{j\in J} (P^{ab}_{j} S_{a}
R^{ab}_{j}\partial_{b} - R^{ab}_{j} S_{b} P^{ab}_{j}\partial_{a})\,. \]
This works because for each arrow \(a\) the path \(S_{a}\) runs in the
opposite direction to \(a\), hence it can always replace \(\partial_{a}\)
inside a word in \(\bb{k}\bd{Q}\) without making it zero.

We can relate the map \(\tilde{\pi}\) with the action of the bivector \(\pi\in
\mathcal{V}^{2}Q\) on a pair of 2-forms \(\alpha,\beta\in \DR^{1}(Q)\) by the
usual formula
\begin{equation}
  \label{eq:rel-pi-pit}
  \pi(\alpha,\beta) = \pair{\beta}{\tilde{\pi}(\alpha)} =
  i_{\tilde{\pi}(\alpha)}(\beta)\,.
\end{equation}
More explicitly, if \(\alpha\) is represented by \(\sum_{c} S_{c}\de c\) and
\(\beta\) by \(\sum_{c'} T_{c'}\de c'\) then
\[ \pi(\alpha,\beta) = \sum_{a,b\in Q} \sum_{j\in J} (P^{ab}_{j} S_{a}
R^{ab}_{j} T_{b} - R^{ab}_{j} S_{b} P^{ab}_{j} T_{a})\,.  \]
Now we would like to introduce the analogue of tensors of type \((1,1)\) on
\(\bb{k}Q\). We locate the salient feature of these objects in their ability
to be interpreted simultaneously as maps \(\app{N}{\Der(Q)}{\Der(Q)}\) and as
maps \(\app{N^{*}}{\DR^{1}(Q)}{\DR^{1}(Q)}\), related by the familiar equality
\begin{equation}
  \label{eq:rel-N-Nst}
  \pair{N^{*}(\alpha)}{\theta} = \pair{\alpha}{N(\theta)} \quad\text{ for
    every } \alpha\in \DR^{1}(Q),\, \theta\in \Der(Q)\,,
\end{equation}
where the pairing is defined by equation \eqref{eq:def-pair-nc3}. Not every
endomorphism of \(\Der(Q)\) has this property.
\begin{defn}
  \label{def:reg}
  A \(\bb{k}\)-linear endomorphism \(\app{N}{\Der(Q)}{\Der(Q)}\) is called
  \emph{regular} if there exists a derivation \(\app{\de^{N}}{\bb{k}Q}
  {\Omega^{1}(Q)}\) such that \(i_{\theta}\circ \de^{N} = N(\theta)\) for
  every \(\theta\in \Der(Q)\).
\end{defn}
Clearly the map \(\de^{N}\), if exists, is unique since for every arrow \(a\in
Q\) the 1-form \(\de^{N}a\) is completely determined by the paths
\((i_{\partial_{b}}(\de^{N}a))_{b\in Q}\), which by definition coincide with
\(N(\partial_{b})(a)\). It follows that to each regular endomorphism \(N\) we
can associate the unique morphism of \(\bb{k}Q\)-bimodules
\[ \app{N^{*}}{\Omega^{1}(Q)}{\Omega^{1}(Q)} \]
defined by sending each generator \(\de a\) of \(\Omega^{1}(Q)\) to the 1-form
\(\de^{N}a\). As every morphism of \(\bb{k}Q\)-bimodules preserves the linear
subspace \([\bb{k}Q,\Omega^{1}Q]\) inside \(\Omega^{1}(Q)\), this recipe
induces a unique \(\bb{k}\)-linear map \(\DR^{1}(Q)\to \DR^{1}(Q)\) that we
also denote by \(N^{*}\). By definition, we have
\[ \pair{N^{*}(\de a)}{\theta} = i_{\theta}(N^{*}(\de a)) =
i_{\theta}(\de^{N} a) = N(\theta)(a) = i_{N(\theta)}(\de a) = 
\pair{\de a}{N(\theta)}\,, \]
from which \eqref{eq:rel-N-Nst} follows by the \(\bb{k}Q\)-linearity of
\(N^{*}\) and \(i_{\theta}\). We shall call \(N^{*}\) the \emph{transpose} of
\(N\).

Let us remark that, conversely, if we are given a \(\bb{k}Q\)-linear map
\(\app{N^{*}}{\Omega^{1}(Q)}{\Omega^{1}(Q)}\) such that \eqref{eq:rel-N-Nst}
holds then the endomorphism \(N\) is necessarily regular, as the map
\(\app{\de^{N}}{\bb{k}Q}{\Omega^{1}(Q)}\) defined as the unique derivation
sending the arrow \(a\in Q\) to \(N^{*}(\de a)\) clearly has the property
required by definition \ref{def:reg}.

Hence for us a \emph{tensor of type (1,1) on} \(\bb{k}Q\) will be given by,
equivalently, a regular endomorphism \(\app{N}{\Der(Q)}{\Der(Q)}\) or the
corresponding transpose \(\app{N^{*}}{\DR^{1}(Q)}{\DR^{1}(Q)}\).

\subsection{PN structures on path algebras}

We continue by briefly recalling some concepts and results from \cite{ksm90}.
\begin{defn}
  \label{def:nt}
  Let \(E\) be a Lie algebra over the field \(\bb{k}\) and \(\app{N}{E}{E}\)
  be a linear map.
  \begin{enumerate}
  \item The \(N\)\emph{-deformed bracket} on \(E\) is the \(E\)-valued
    2-form on \(E\) defined by
    \[ [x,y]_{N}\deq [N(x),y] + [x,N(y)] - N([x,y])\,. \]
  \item The \emph{Nijenhuis torsion} of \(N\) is the \(E\)-valued 2-form on
    \(E\) defined by
    \begin{equation}
      \label{eq:def-TN}
      \mathcal{T}_{N}(x,y)\deq [N(x),N(y)] - N\bigl( [N(x),y] + [x,N(y)] -
        N([x,y])\bigr)\,.
    \end{equation}
  \end{enumerate}
  Two Lie brackets on \(E\) are \emph{compatible} if their sum is also a Lie
  bracket.
\end{defn}
\begin{teo}[Corollary 1.1 in \cite{ksm90}]
  If \(\mathcal{T}_{N}=0\) then \([\cdot,\cdot]_{N}\) is also a Lie bracket on
  \(E\) which is compatible with \([\cdot,\cdot]\).
\end{teo}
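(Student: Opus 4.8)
The plan is to dispose of the formal content first and then reduce the whole statement to a single computation, which I would run twice. First I would record the easy reductions. The defining formula $[x,y]_N = [N(x),y]+[x,N(y)]-N([x,y])$ is $\bb{k}$-bilinear and antisymmetric in $x,y$, these properties being inherited from $[\cdot,\cdot]$; so ``$[\cdot,\cdot]_N$ is a Lie bracket'' is precisely the Jacobi identity for $[\cdot,\cdot]_N$. Likewise $[\cdot,\cdot]+[\cdot,\cdot]_N$ is bilinear and antisymmetric, so ``compatible'' (that is, the sum is again a Lie bracket) is also a single Jacobi identity.

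Next I would exploit the identity $[x,y]_{\mathrm{id}+N} = [x,y] + [x,y]_N$, obtained by substituting the operator $\mathrm{id}+N$ into the definition of the deformed bracket and expanding: the sum of our two brackets is itself a deformed bracket, attached now to $\mathrm{id}+N$. Together with this I would check that $\mathcal{T}_{\mathrm{id}+N} = \mathcal{T}_N$; this holds because $A\mapsto \mathcal{T}_A$ is a homogeneous quadratic map in the linear operator $A$, $\mathcal{T}_{\mathrm{id}} = 0$, and the associated symmetric bilinear cross term evaluated at $(N,\mathrm{id})$ also vanishes, as one sees by collecting terms in the definition \eqref{eq:def-TN}. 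With these two facts, the hypothesis $\mathcal{T}_N = 0$ forces $\mathcal{T}_{\mathrm{id}+N} = 0$ as well, and both Jacobi identities above become instances of the single implication
\[ \mathcal{T}_M = 0 \ \Longrightarrow\ [\cdot,\cdot]_M \text{ satisfies the Jacobi identity}, \]
which I would then apply with $M = N$ and with $M = \mathrm{id}+N$.

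To prove this implication I would first rewrite \eqref{eq:def-TN} as $\mathcal{T}_M(x,y) = [M(x),M(y)] - M\bigl([x,y]_M\bigr)$, so that the hypothesis $\mathcal{T}_M = 0$ becomes the statement that $M$ intertwines the two brackets, $M\bigl([x,y]_M\bigr) = [M(x),M(y)]$, i.e. $M\colon (E,[\cdot,\cdot]_M)\to (E,[\cdot,\cdot])$ is a Lie algebra morphism. When $M$ is invertible this already finishes the argument, since then $[x,y]_M = M^{-1}[M(x),M(y)]$ is the pullback of the Lie bracket $[\cdot,\cdot]$ along the linear isomorphism $M$, hence a Lie bracket. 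In general I would expand the Jacobiator $[[x,y]_M,z]_M + [[y,z]_M,x]_M + [[z,x]_M,y]_M$, use the intertwining relation to replace every occurrence of $M([u,v]_M)$ by $[M(u),M(v)]$, and then collect terms: invoking only antisymmetry and the Jacobi identity of $[\cdot,\cdot]$, the remaining contributions cancel in pairs and the Jacobiator vanishes. This term-chase is exactly the content of Corollary 1.1 of \cite{ksm90}, so in the paper I would either reproduce it or simply cite it.

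The hard part, then, is only that final cancellation for non-invertible $M$. It is mechanical once the intertwining reformulation is in hand, but it genuinely has to be carried out: because $E = \Der(Q)$ is infinite-dimensional, one cannot reduce the non-invertible case to the invertible one by perturbing $M$ to $M + t\,\mathrm{id}$ and appealing to genericity, so the direct verification (or the appeal to \cite{ksm90}) is unavoidable.
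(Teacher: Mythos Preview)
The paper gives no proof of this theorem at all: it is stated with the attribution ``Corollary 1.1 in \cite{ksm90}'' and then used without further comment. Your proposal therefore does strictly more than the paper, and what you outline is correct.

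Your reduction of the compatibility claim to a second instance of the implication ``\(\mathcal{T}_{M}=0\Rightarrow [\cdot,\cdot]_{M}\) is Lie'' via the identities \([x,y]_{\mathrm{id}+N}=[x,y]+[x,y]_{N}\) and \(\mathcal{T}_{\mathrm{id}+N}=\mathcal{T}_{N}\) is a clean device; both identities check out by direct expansion, and this packaging is not in the paper. The intertwining reformulation \(M([x,y]_{M})=[M(x),M(y)]\) of \(\mathcal{T}_{M}=0\) is exactly the observation the paper records immediately after the theorem, so you are aligned with the surrounding text. One minor slip: in your last paragraph you write ``\(E=\Der(Q)\)'', but the theorem is stated for an arbitrary Lie algebra \(E\); the specialization to \(\Der(Q)\) occurs only later in the paper. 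Your caution about the perturbation argument is nonetheless well placed, since \(E\) is not assumed finite-dimensional.
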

It also follows that \(N\) is a morphism of Lie algebras from
\((E,[\cdot,\cdot]_{N})\) to \((E,[\cdot,\cdot])\): indeed the condition
\(\mathcal{T}_{N}=0\) is equivalent to
\[ [N(x),N(y)] = N([x,y]_{N}) \quad\text{ for all } x,y\in E\,. \]
We can now give the following
\begin{defn}
  Let \(Q\) be a quiver. A \emph{Nijenhuis tensor on the path algebra}
  \(\bb{k}Q\) is a regular endomorphism \(\app{N}{\Der(Q)}{\Der(Q)}\) such
  that \(\mathcal{T}_{N}=0\).
\end{defn}
Let us recall how a Nijenhuis tensor on \(\bb{k}Q\) defines a ``deformed
version'' of the Cartan calculus on \(\DR^{\bullet}(Q)\). We have already
defined the derivation \(\app{\de^{N}}{\bb{k}Q}{\Omega^{1}(Q)}\), which can be
extended to a degree 1 derivation of the DG algebra \(\Omega^{\bullet}(Q)\) in
the usual way, that is imposing the (anti)commutation rule
\[ \de^{N}\circ \de + \de\circ \de^{N} = 0\,. \]
The vanishing of \(\mathcal{T}_{N}\) then implies \(\de^{N}\circ \de^{N}=0\).
We can also define a deformed Lie derivative \(\ld{\theta}^{N}\) as
\[ \ld{\theta}^{N}\deq \de^{N}\circ i_{\theta} + i_{\theta}\circ \de^{N}\,. \]
These operators obey a deformed version of the identities \eqref{eq:id-ld}
where the usual commutator bracket on \(\Der(Q)\) is replaced by the bracket
\([\cdot,\cdot]_{N}\):
\[ [\ld{\theta}^{N}, \ld{\eta}^{N}] = \ld{[\theta,\eta]_{N}}^{N}\,,\qquad
[\ld{\theta}^{N}, i_{\eta}]= i_{[\theta,\eta]_{N}}\,. \]
In particular the action of \(\ld{\theta}^{N}\) on 1-forms is given by
\[ \ld{\theta}^{N}(\beta) = \ld{N(\theta)}(\beta) - \ld{\theta}(N^{*}(\beta))
+ N^{*}(\ld{\theta}(\beta))\,. \]
Finally, both maps \(\de^{N}\) and \(\ld{\theta}^{N}\) descend from
\(\Omega^{\bullet}(Q)\) to \(\DR^{\bullet}(Q)\) by the usual arguments.

Suppose now that the quiver \(Q\) comes equipped with a noncommutative
bivector \(\pi\in \mathcal{V}^{2}Q\). Then we can ``dualize'' the Lie bracket
\([\cdot,\cdot]\) on \(\Der(Q)\) to a bracket defined on \(\DR^{1}(Q)\)
according to the well-known formula
\begin{equation}
  \label{eq:def-br-1f}
  \{\alpha,\beta\}_{\pi}\deq \ld{\tilde{\pi}(\alpha)}(\beta) -
  \ld{\tilde{\pi}(\beta)}(\alpha) - \de (\pi(\alpha,\beta))\,.
\end{equation}
\begin{teo}[Proposition 3.2 in \cite{ksm90}]
  The bracket \eqref{eq:def-br-1f} obeys the Jacobi identity if and only if
  \(\pi\) is a double Poisson structure, and in this case one has
  \begin{equation}
    \label{eq:pit-mor}
    \tilde{\pi}(\{\alpha,\beta\}_{\pi}) =
    [\tilde{\pi}(\alpha),\tilde{\pi}(\beta)]\,,
  \end{equation}
  that is, \(\tilde{\pi}\) is a morphism of Lie algebras.
\end{teo}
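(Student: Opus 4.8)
The plan is to transport the proof of Proposition 3.2 in \cite{ksm90}, which is purely formal: it uses only the Cartan formula \(\ld{\theta}=\de\circ i_{\theta}+i_{\theta}\circ\de\), the identities \eqref{eq:id-ld}, the \(\bb{k}\)-bilinearity and perfectness of the pairing \eqref{eq:def-pair-nc3}, the graded skew-symmetry and graded Jacobi identity of the Schouten bracket of Definition \ref{schouten}, and the relation \eqref{eq:rel-pi-pit} tying \(\pi\) to \(\tilde{\pi}\). Both claims — that \(\pb{\cdot}{\cdot}_{\pi}\) satisfies the Jacobi identity, and the morphism property \eqref{eq:pit-mor} — will be reduced to the vanishing of the Schouten square \([\pi,\pi]\in\mathcal{V}^{3}Q\), which is by definition the double Poisson condition on \(\pi\).

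Put \(\mathcal{C}(\alpha,\beta)\deq[\tilde{\pi}(\alpha),\tilde{\pi}(\beta)]-\tilde{\pi}(\pb{\alpha}{\beta}_{\pi})\in\Der(Q)\), so that \eqref{eq:pit-mor} is precisely the assertion \(\mathcal{C}\equiv 0\). The first step is to evaluate \(\mathcal{C}(\alpha,\beta)\) on the generators \(a\in Q\) — that is, to compute \(i_{\mathcal{C}(\alpha,\beta)}(\de a)\in\bb{k}Q\) — which by Lemma \ref{der-v1} determines \(\mathcal{C}(\alpha,\beta)\) completely. Expanding \(i_{[\tilde{\pi}(\alpha),\tilde{\pi}(\beta)]}(\de a)\) via the second identity in \eqref{eq:id-ld} together with the Cartan formula, and expanding \(i_{\tilde{\pi}(\pb{\alpha}{\beta}_{\pi})}(\de a)\) by unfolding \eqref{eq:def-br-1f} and using the skew-symmetry \(\pi(\alpha,\beta)=-\pi(\beta,\alpha)\) in \(\DR^{0}(Q)\) (a consequence of cyclicity), one finds that in the difference the Lie derivatives cancel, the dependence on \(\tilde{\pi}\) survives only through its values on the three 1-forms involved, and the outcome is a totally skew trilinear expression in \(\alpha\), \(\beta\) and the third slot, built out of \(\pi\), the interior products and \(\de\) alone. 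The key lemma is that this expression coincides, up to a universal nonzero scalar, with the contraction of the trivector \([\pi,\pi]\) against the three 1-forms; granting it, \([\pi,\pi]=0\) yields \(i_{\mathcal{C}(\alpha,\beta)}(\de a)=0\) for all \(a\), hence \(\mathcal{C}\equiv 0\), which is \eqref{eq:pit-mor}.

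For the equivalence with the Jacobi identity I would compute the Jacobiator \(J(\alpha,\beta,\gamma)\deq\pb{\pb{\alpha}{\beta}_{\pi}}{\gamma}_{\pi}+\pb{\pb{\beta}{\gamma}_{\pi}}{\alpha}_{\pi}+\pb{\pb{\gamma}{\alpha}_{\pi}}{\beta}_{\pi}\). Unfolding \eqref{eq:def-br-1f} twice, substituting the identity \(\tilde{\pi}(\pb{\alpha}{\beta}_{\pi})=[\tilde{\pi}(\alpha),\tilde{\pi}(\beta)]-\mathcal{C}(\alpha,\beta)\) established above (which holds for any bivector, not only Poisson ones), and using \eqref{eq:id-ld}, the Cartan formula and the Jacobi identity for the commutator bracket on \(\Der(Q)\), the iterated commutators cancel by Jacobi in \(\Der(Q)\) and \(J(\alpha,\beta,\gamma)\) is seen to equal \(\de\) of a scalar multiple of \([\pi,\pi](\alpha,\beta,\gamma)\in\DR^{0}(Q)\) plus terms linear in \(\mathcal{C}\). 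By the key lemma every contribution is controlled by \([\pi,\pi]\), so \(J\) vanishes identically if and only if \([\pi,\pi]=0\): the forward implication is immediate, and the reverse uses perfectness of the pairing to recover the trivector \([\pi,\pi]\) from the contractions that appear. Hence \(\pb{\cdot}{\cdot}_{\pi}\) is a Lie bracket precisely when \(\pi\) is a double Poisson structure, and in that case \eqref{eq:pit-mor} holds by the previous paragraph.

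The principal obstacle is the Koszul-type lemma matching the Cartan-calculus concomitant \(\mathcal{C}\) with the Schouten square \([\pi,\pi]\): the two live in different idioms, since \([\pi,\pi]\) is defined combinatorially through the superderivatives \(D_{y}\) of Definition \ref{schouten}, carrying the signs \((-1)^{n_{i}m_{i}}\) dictated by the graded relations \eqref{relations-polyvectors}, whereas \(\mathcal{C}\) is a sum of nested interior products and de Rham differentials. To establish the match I would write \(\pi=\sum_{a,b,j}[P^{ab}_{j}\partial_{a},R^{ab}_{j}\partial_{b}]\) as in the discussion preceding \eqref{eq:rel-pi-pit}, compute \([\pi,\pi]\) term by term, and reconcile the sign bookkeeping — where essentially all the labour lies. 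As a consistency check one may instead descend to representations: by Theorem \ref{teor-discesa}, \(\check{\pi}\) is an ordinary bivector on each \(\Rep_{\bb{k}}(Q,\mathbf{n})\), the hat map commutes with \(\de\) and with the pairings, so \(\pb{\cdot}{\cdot}_{\pi}\) descends to the classical Koszul bracket of \(\check{\pi}\) and both assertions become the classical Proposition 3.2 on each representation space; together with Theorem \ref{teo:bie} this furnishes an independent route, provided the hat and check maps are shown to detect the vanishings in question.
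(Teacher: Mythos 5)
Your proposal is correct and follows essentially the same two routes as the paper, whose proof simply observes that the argument of Proposition 3.2 in \cite{ksm90} uses only the algebraic properties of the Schouten bracket and hence transfers verbatim, and alternatively that the statement can be checked on every representation space via theorems \ref{teor-discesa} and \ref{teo:bie}. Your main line is a fleshed-out version of the first route (with the Koszul-type lemma relating the concomitant to \([\pi,\pi]\) being exactly the ``algebraic property'' the paper invokes), and your consistency check is the paper's second route, including the proviso about detecting vanishing at the noncommutative level.
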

\begin{proof}
  The proof in \cite{ksm90} uses only the algebraic properties of the Schouten
  bracket, therefore it works also in our setting. Alternatively, one can
  simply observe that the analogous result holds for the induced (commutative)
  structures on every representation space, hence it must hold already at the
  noncommutative level.
\end{proof}
From now on we assume that the path algebra \(\bb{k}Q\) is equipped with both
a Nijenhuis tensor \(N\) and a double Poisson structure \(\pi\). Let us say
that \(N\) and \(\pi\) are \emph{algebraically compatible} if \(N\circ
\tilde{\pi} = \tilde{\pi}\circ N^{*}\) as maps \(\DR^{1}(Q)\to \Der(Q)\). We
denote by \(\pi^{N}\) the (unique) bivector in \(\mathcal{V}^{2}Q\) associated
to this map; then for every pair of 1-forms \(\alpha,\beta\in \DR^{1}(Q)\) we
have
\begin{equation}
  \label{eq:def-pi-N}
  \pi^{N}(\alpha,\beta) = \pi(N^{*}(\alpha),\beta) =
  \pi(\alpha,N^{*}(\beta)).
\end{equation}
Now let us introduce, again following \cite{ksm90}, two possible deformations
of the bracket \eqref{eq:def-br-1f}. The first one is simply its
\(N^{*}\)-deformed version, in the sense of definition \ref{def:nt}:
\begin{equation}
  \label{eq:def-b1f-1}
  \{\alpha,\beta\}_{\pi,N^{*}} = \{N^{*}(\alpha),\beta\}_{\pi} +
  \{\alpha,N^{*}(\beta)\}_{\pi} - N^{*}(\{\alpha,\beta\}_{\pi})\,.
\end{equation}
The second deformation is obtained by replacing the operators \(\de\) and
\(\ld{\theta}\) in the definition \eqref{eq:def-br-1f} with the operators
\(\de^{N}\) and  \(\ld{\theta}^{N}\) introduced above. The resulting bracket
reads
\begin{multline}
  \label{eq:def-b1f-2}
  \{\alpha,\beta\}_{\pi}' = 
  \ld{N(\tilde{\pi}(\alpha))}(\beta) - \ld{\tilde{\pi}(\alpha)}(N^{*}(\beta)) +
  N^{*}(\ld{\tilde{\pi}(\alpha)}(\beta)) +\\
  - \ld{N(\tilde{\pi}(\beta))}(\alpha) +
  \ld{\tilde{\pi}(\beta)}(N^{*}(\alpha)) -
  N^{*}(\ld{\tilde{\pi}(\beta)}(\alpha)) - N^{*}(\de(\pi(\alpha,\beta)))\,.
\end{multline}
We can now define the noncommutative version of the \emph{Magri-Morosi
  concomitant} as
\[ C_{(\pi,N)}(\alpha,\beta) \deq \frac{1}{2} \left(\{\alpha,\beta\}_{\pi,N^{*}} - \{\alpha,\beta\}_{\pi}'\right)\,. \]
By direct computation one sees that
\[ C_{(\pi,N)}(\alpha,\beta) = \ld{\tilde{\pi}(\alpha)}(N^{*}(\beta)) -
\ld{\tilde{\pi}(\beta)}(N^{*}(\alpha)) - \de (\pi^{N}(\alpha,\beta)) -
N^{*}\left( \ld{\tilde{\pi}(\alpha)}(\beta) - \ld{\tilde{\pi}(\beta)}(\alpha)
  -\de (\pi(\alpha,\beta))\right)\,. \]
We say that \(\pi\) and \(N\) are \emph{differentially compatible} if
\(C_{(\pi,N)}=0\); obviously this happens if and only if the two brackets
\eqref{eq:def-b1f-1} and \eqref{eq:def-b1f-2} coincide. As shown in
\cite{ksm90} this also implies that both these brackets coincide with the
bracket between 1-forms induced by the bivector \(\pi^{N}\) defined by
\eqref{eq:def-pi-N}.
\begin{defn}
  \label{def:compat}
  Let \(Q\) be a quiver, \(\pi\in \mathcal{V}^{2}Q\) a double Poisson
  structure and \(\app{N}{\Der(Q)}{\Der(Q)}\) a Nijenhuis tensor on
  \(\bb{k}Q\). We say that \(\pi\) and \(N\) are \emph{compatible} if
  \begin{enumerate}
  \item \(N\circ \tilde{\pi} = \tilde{\pi}\circ N^{*}\) and
  \item \(C_{(\pi,N)}=0\).
  \end{enumerate}
  In this case we call the pair consisting of \(\pi\) and \(N\) a
  \emph{Poisson-Nijenhuis structure on the path algebra} \(\bb{k}Q\).
\end{defn}
In this new setting the main result of the theory reads as follows:
\begin{teo}\label{mainth1}
  Let \(Q\) be a quiver and \((\pi, N)\) a Poisson-Nijenhuis structure on
  \(\bb{k}Q\). Then the bivector \(\pi^{N}\) defined by equation
  \eqref{eq:def-pi-N} is a double Poisson structure on \(\bb{k}Q\) which is
  compatible with \(\pi\).
\end{teo}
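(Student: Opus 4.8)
The plan is to transport the entire statement to the representation spaces and invoke Theorem \ref{teo:bie} together with the classical Poisson-Nijenhuis theory. Concretely, for each dimension vector $\mathbf{n}$ the bivector $\check{\pi}$ is an ordinary Poisson structure on $\Rep_{\bb{k}}(Q,\mathbf{n})$ by the corollary following Theorem \ref{teor-discesa}, and the regular endomorphism $N$ induces, via the $(1,1)$-tensor formalism, an ordinary $(1,1)$-tensor field $\check{N}$ on $\Rep_{\bb{k}}(Q,\mathbf{n})$ whose transpose is $\widehat{N^{*}}$ in the sense compatible with point (3) of Theorem \ref{teor-discesa}. First I would check that the conditions defining a Nijenhuis tensor and the two compatibility conditions of Definition \ref{def:compat} descend: $\mathcal{T}_{N}=0$ descends because the Schouten bracket, the pairing, and hence the deformed bracket $[\cdot,\cdot]_{N}$ are all intertwined by the morphisms $\widehat{\phantom{x}}$ and $\check{\phantom{x}}$; likewise $N\circ\tilde{\pi}=\tilde{\pi}\circ N^{*}$ and $C_{(\pi,N)}=0$ are expressed entirely through $\de$, $\ld{\theta}$, $i_{\theta}$, the pairing and the Schouten bracket, all of which commute with the relevant morphisms. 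Therefore $(\check{\pi},\check{N})$ is a genuine Poisson-Nijenhuis structure on each $\Rep_{\bb{k}}(Q,\mathbf{n})$.

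Once this is established, the classical theorem of Kosmann-Schwarzbach and Magri \cite{ksm90} (or Magri-Morosi) applies verbatim on the finite-dimensional manifold $\Rep_{\bb{k}}(Q,\mathbf{n})$: the bivector $\check{\pi}_{\check{N}} = \check{N}\circ\check{\pi}$ is a Poisson structure compatible with $\check{\pi}$. It remains to identify this bivector with $\widecheck{\pi^{N}}$. This follows from \eqref{eq:def-pi-N}: the defining relation $\pi^{N}(\alpha,\beta)=\pi(N^{*}(\alpha),\beta)$ for all 1-forms, combined with point (3) of Theorem \ref{teor-discesa} and the fact that $N^{*}$ corresponds to the transpose of $\check{N}$, forces $\widecheck{\pi^{N}}(\hat\alpha,\hat\beta)=\check{\pi}(\widehat{N^{*}(\alpha)},\hat\beta)=\check{\pi}(\check{N}^{*}\hat\alpha,\hat\beta)=\check{\pi}_{\check{N}}(\hat\alpha,\hat\beta)$; since the image of $\DR^{1}(Q)$ under $\widehat{\phantom{x}}$ contains enough invariant 1-forms to separate invariant bivectors (this is part of what makes Theorem \ref{teo:bie} work), we conclude $\widecheck{\pi^{N}}=\check{\pi}_{\check{N}}$ as $G_{\mathbf{n}}$-invariant bivectors. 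Hence $\widecheck{\pi^{N}}$ is Poisson and compatible with $\check{\pi}$ on every $\Rep_{\bb{k}}(Q,\mathbf{n})$.

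Finally I would pull this back. By Theorem \ref{teo:bie}, since $\widecheck{\pi^{N}}$ is a Poisson structure on all representation spaces, $\pi^{N}$ is a double Poisson structure on $\bb{k}Q$; and compatibility of $\pi$ and $\pi^{N}$ means $[\pi+\pi^{N},\pi+\pi^{N}]=0$, i.e. (by bilinearity of the Schouten bracket and $[\pi,\pi]=0=[\pi^{N},\pi^{N}]$) that $[\pi,\pi^{N}]=0$ in $\mathcal{V}^{3}Q$. The latter can again be verified on representation spaces: $\widecheck{[\pi,\pi^{N}]}=[\check{\pi},\widecheck{\pi^{N}}]=[\check{\pi},\check{\pi}_{\check{N}}]=0$ by the classical result, and since $\check{\phantom{x}}$ commutes with Schouten brackets and $\pi^{N}+\pi$ is still an invariant bivector on each $\Rep_{\bb{k}}(Q,\mathbf{n})$, Theorem \ref{teo:bie} applied to $\pi+\pi^{N}$ (or a direct injectivity argument) gives $[\pi,\pi^{N}]=0$ at the noncommutative level. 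An alternative, entirely intrinsic route would imitate the proof in \cite{ksm90} line by line using only the graded Jacobi identity, gradedness of the Schouten bracket, and the identities \eqref{eq:id-ld} for the deformed Cartan calculus — all of which are available in our setting — but this is longer and the representation-theoretic shortcut is cleaner.

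\medskip
\textbf{Expected main obstacle.} The delicate point is not the descent of the \emph{equations} (those are formal consequences of Theorem \ref{teor-discesa}) but rather ensuring that the identification $\widecheck{\pi^{N}}=\check{\pi}_{\check{N}}$ is legitimate, i.e. that the map $\DR^{1}(Q)\to\Omega^{1}(\Rep_{\bb{k}}(Q,\mathbf{n}))^{G_{\mathbf{n}}}$ sends enough 1-forms to pin down an invariant bivector, and that $\check{N}$ as I have defined it genuinely is the $(1,1)$-tensor with transpose $\widehat{N^{*}}$ — this requires unwinding the regularity condition (Definition \ref{def:reg}) on representation spaces and checking it produces an honest tensor field there. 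Handling this carefully, or alternatively committing fully to the intrinsic proof along the lines of \cite{ksm90}, is where the real work lies; I would likely present the intrinsic argument in the main text for self-containedness and relegate the representation-space viewpoint to a remark.
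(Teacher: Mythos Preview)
Your primary approach---descend to representation spaces, apply the classical Kosmann-Schwarzbach--Magri theorem there, then pull back via Theorem~\ref{teo:bie}---is genuinely different from the paper's. The paper argues entirely at the noncommutative level, which is precisely the ``intrinsic route'' you mention only as a secondary alternative. Its proof is in fact short: it records the identity
\[
\pair{\mathcal{T}_{N^{*}}(\alpha,\beta)}{\theta} + \pair{\alpha}{\mathcal{T}_{N}(\tilde{\pi}(\beta),\theta)} = \pair{C_{(\pi,N)}(N^{*}(\alpha),\beta)}{\theta} - \pair{C_{(\pi,N)}(\alpha,\beta)}{N(\theta)},
\]
valid for any double Poisson $\pi$ and any regular, algebraically compatible $N$; under the PN hypotheses this yields $\mathcal{T}_{N^{*}}=0$, and then the earlier results on deformed brackets (the analogue of Corollary~1.1 and Proposition~3.2 of \cite{ksm90}) finish the argument. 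So the intrinsic route is not the longer one---it is the shortcut.

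The trade-off is exactly the obstacle you flagged: the paper never has to realize $N$ as an honest $(1,1)$-tensor field $\check{N}$ on $\Rep_{\bb{k}}(Q,\mathbf{n})$. Theorem~\ref{teor-discesa} supplies morphisms only for forms and polyvector fields, not for regular endomorphisms, so your descent step (and the verification that $\mathcal{T}_{\check{N}}=0$, $C_{(\check{\pi},\check{N})}=0$ hold as tensor equations on the whole tangent bundle, not merely on the image of $\Der(Q)$) requires machinery the paper does not develop. Your strategy is viable in principle, and your use of Theorem~\ref{teo:bie} applied to $\pi+\pi^{N}$ to recover compatibility is correct and clean; but here it is the longer path, contrary to your assessment.
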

\begin{proof}
  As in the classical case \cite{ksm90} the result hinges on the following
  identity, which is valid for every double Poisson structure \(\pi\) and for
  every regular endomorphism \(N\) (not necessarily with zero torsion) which
  is algebraically compatible with it:
  \begin{equation}
    \label{eq:id-ksm}
    \pair{\mathcal{T}_{N^{*}}(\alpha,\beta)}{\theta} +
    \pair{\alpha}{\mathcal{T}_{N}(\tilde{P}(\beta),\theta)} =
    \pair{C_{(\pi,N)}(N^{*}(\alpha),\beta)}{\theta} -
    \pair{C_{(\pi,N)}(\alpha,\beta)}{N(\theta)}\,.
  \end{equation}
  This equality can be verified by a direct computation. In the hypotheses
  stated, identity \eqref{eq:id-ksm} implies that \(\mathcal{T}_{N^{*}}=0\).
  By theorem 1 it follows that the bracket \(\{\cdot,\cdot\}_{\pi,N^{*}}\)
  obeys the Jacobi identity and is compatible with \(\{\cdot,\cdot\}_{\pi}\).
  But the bracket \(\{\cdot,\cdot\}_{\pi,N^{*}}\) coincides with the bracket
  \(\{\cdot,\cdot\}_{\pi^{N}}\), so that \(\pi^{N}\) is a double Poisson
  structure by theorem 2 and is compatible with \(\pi\).
\end{proof}
Clearly the process can be iterated, so that on a quiver equipped with a PN
structure we have a whole hierarchy of double Poisson structures defined by
the maps \((\tilde{\pi}_{k})_{k\geq 0}\), where
\[ \tilde{\pi}_{k}\deq \underbrace{N\circ \dots\circ N}_{k\text{ times}}\circ \tilde{\pi}\,, \]
and every pair of such double Poisson structures is compatible.

An important special case is when the first Poisson structure is
\emph{invertible}, that is, when it comes from a noncommutative symplectic
structure on \(\bb{k}Q\). Let us recall \cite{kont93,ginz01} that a
\emph{noncommutative symplectic structure} on \(\bb{k}Q\) is given by a 2-form
\(\omega\in \DR^{2}(Q)\) which is closed (\(\de\omega=0\)) and non degenerate,
meaning that the map \(\app{\omega^{\flat}}{\Der(Q)}{\DR^{1}(Q)}\) defined by
\(\theta\mapsto i_{\theta}(\omega)\) is invertible. Let us denote by
\(\app{\omega^{\sharp}}{\DR^{1}(Q)}{\Der(Q)}\) its inverse; it maps a 1-form
\(\alpha\) to the unique derivation such that
\(i_{\omega^{\sharp}(\alpha)}(\omega) = \alpha\). For any \(f\in \DR^{0}(Q)\)
we have the corresponding ``Hamiltonian derivation'' \(\theta_{f} =
-\omega^{\sharp}(\de f)\).

The following lemma (already implicit in \cite{bie13}) clarifies the
relationship between symplectic forms and Poisson bivectors on quiver path
algebras.
\begin{lem}
  Suppose \(\omega\) is a non-degenerate 2-form on \(\bb{k}Q\) and let
  \(\pi\in \mathcal{V}^{2}Q\) be the bivector associated to
  \(-\omega^{\sharp}\) by the equality \eqref{eq:rel-pi-pit}. Then \(\omega\)
  is symplectic (\(\de\omega=0\)) if and only if \(\pi\) is a double Poisson
  structure (\([\pi,\pi]=0\)).
\end{lem}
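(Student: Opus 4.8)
The plan is to reduce the statement to a computation that is most transparent at the level of representation spaces, using the functoriality established in theorem \ref{teor-discesa}, and to complement this with a direct algebraic argument valid already on \(\bb{k}Q\).

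First I would set up the correspondence carefully. Given a non-degenerate \(\omega\in \DR^{2}(Q)\), the map \(\omega^{\flat}\colon \Der(Q)\to \DR^{1}(Q)\), \(\theta\mapsto i_{\theta}(\omega)\), is by hypothesis invertible with inverse \(\omega^{\sharp}\); then \(-\omega^{\sharp}\colon \DR^{1}(Q)\to \Der(Q)\) is a linear map, and by the discussion preceding the lemma it corresponds (via equation \eqref{eq:rel-pi-pit}) to a unique bivector \(\pi\in \mathcal{V}^{2}Q\), characterized by \(\pi(\alpha,\beta) = \pair{\beta}{-\omega^{\sharp}(\alpha)}\). The first thing to record is the antisymmetry already built into this: because \(\omega\) is a 2-form, \(\pair{\omega^{\sharp}(\alpha)}{\beta} = -\pair{\omega^{\sharp}(\beta)}{\alpha}\) — equivalently \(\omega(\theta,\eta) = -\omega(\eta,\theta)\) unwound through \(\omega^{\sharp}\) — so \(\pi(\alpha,\beta) = -\pi(\beta,\alpha)\) at the level of \(\DR^{0}(Q)\), consistent with \(\pi\in \mathcal{V}^{2}Q\). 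This part is purely formal.

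Next I would translate \(\de\omega=0\) and \([\pi,\pi]=0\) into the ``same'' condition. The cleanest route uses theorem \ref{teor-discesa} together with theorem \ref{teo:bie}: the hat map sends \(\omega\) to a \(G_{\mathbf{n}}\)-invariant 2-form \(\hat\omega\) on each \(\Rep_{\bb{k}}(Q,\mathbf{n})\), commuting with \(\de\); non-degeneracy of \(\omega\) in the noncommutative sense implies non-degeneracy of \(\hat\omega\) for all sufficiently large \(\mathbf{n}\) (this needs the perfectness of the pairing \eqref{eq:def-pair-nc3} and the fact that \(\check{}\ \) and \(\hat{}\ \) are compatible with the pairing, part (3) of the theorem). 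Under \(\omega\leftrightarrow\pi\) one checks that \(\check\pi\) is exactly the Poisson bivector \(\hat\omega^{-1}\) inverse to the symplectic form \(\hat\omega\) on the representation space. Then: \(\de\omega=0\) holds iff \(\de\hat\omega=0\) for all \(\mathbf{n}\) (because the hat map is injective in the relevant range, again by theorem \ref{teo:bie}-type reasoning), iff \(\hat\omega\) is symplectic for all \(\mathbf{n}\), iff its inverse \(\check\pi\) is Poisson for all \(\mathbf{n}\) (the classical fact that the inverse of a closed non-degenerate 2-form is Poisson, and conversely), iff — by theorem \ref{teo:bie} — \([\pi,\pi]=0\) on \(\bb{k}Q\). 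The one-to-one dictionary ``closed \(\leftrightarrow\) Jacobi'' is thus inherited verbatim from the commutative case on each \(\Rep_{\bb{k}}(Q,\mathbf{n})\).

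I expect the main obstacle to be the non-degeneracy bookkeeping: showing that \(\hat\omega\) is genuinely non-degenerate on \(\Rep_{\bb{k}}(Q,\mathbf{n})\) (not merely that \(\omega^{\flat}\) is invertible as a map of infinite-dimensional \(\bb{k}\)-spaces), and that \(\check\pi = \hat\omega^{-1}\) as bivectors — i.e. that the map induced by \(\check\pi\) on 1-forms is literally inverse to \(\hat\omega^{\flat}\). This is where the specific structure of quiver path algebras and the explicit canonical forms \eqref{can-form-deriv}, \eqref{can-form-1forms} must be used; once it is in place, everything else is either formal or a citation of the classical \(\omega\leftrightarrow\pi\) correspondence. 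As an alternative to the representation-space argument, one can give a direct proof on \(\bb{k}Q\): writing out \([\pi,\pi]\) via definition \ref{schouten} in terms of the paths defining \(-\omega^{\sharp}\), and \(\de\omega\) via \eqref{eq:differential}, and matching the two expressions term by term using \(i_{\omega^{\sharp}(\alpha)}(\omega)=\alpha\); this is the more laborious but self-contained option, and I would relegate it to a remark.
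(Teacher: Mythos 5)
Your overall strategy --- descend to the representation spaces via theorem \ref{teor-discesa}, use the classical symplectic/Poisson dictionary there, and come back with theorem \ref{teo:bie} --- is exactly the strategy of the paper's proof. The difference lies in how the classical dictionary is invoked, and it is precisely at the point you yourself flag as ``the main obstacle'' that your route has a genuine gap: you need \(\hat{\omega}\) to be a genuine symplectic form on (sufficiently large) representation spaces and \(\check{\pi}\) to be literally its inverse bivector. Neither statement follows from part (3) of theorem \ref{teor-discesa} together with the perfectness of the pairing \eqref{eq:def-pair-nc3}: invertibility of \(\app{\omega^{\flat}}{\Der(Q)}{\DR^{1}(Q)}\) is a condition between infinite-dimensional \(\bb{k}\)-spaces, and at a given point of \(\Rep_{\bb{k}}(Q,\mathbf{n})\) the values of the descended derivations \(\check{\theta}\) need not exhaust the tangent space (at the zero representation, for instance, only very special tangent vectors arise this way), so pointwise non-degeneracy of \(\hat{\omega}\) cannot be extracted from the noncommutative hypothesis in the formal way you indicate. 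You leave this step unproved, and your whole reduction hinges on it.

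The paper's proof shows that this step is unnecessary. It never inverts \(\hat{\omega}\) nor identifies \(\check{\pi}\) with \(\hat{\omega}^{-1}\); instead it compares the two induced brackets on \(\DR^{0}(Q)\). Noncommutative non-degeneracy alone yields the Hamiltonian derivations \(\theta_{f}=-\omega^{\sharp}(\de f)\), hence a skew-symmetric bracket \(\{f,g\}=i_{\theta_{g}}(i_{\theta_{f}}(\omega))\), and a one-line computation (using \eqref{eq:rel-pi-pit}) gives \(\pi(\de f,\de g)=\{f,g\}\), so \(\omega\) and \(\pi\) induce the \emph{same} brackets on invariant functions on every \(\Rep_{\bb{k}}(Q,\mathbf{n})\). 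The classical input is then used in the form ``the induced brackets obey Jacobi if and only if the induced 2-forms \(\hat{\omega}\) are closed'', which does not require inverting \(\hat{\omega}\), and theorem \ref{teo:bie} converts ``\(\check{\pi}\) is Poisson for all \(\mathbf{n}\)'' into \([\pi,\pi]=0\). To salvage your outline you would have to either prove the non-degeneracy descent for path algebras (which the paper never needs) or restructure the middle step along the paper's lines; the direct term-by-term computation with definition \ref{schouten} that you mention as an alternative would also work in principle, but you do not carry it out.
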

\begin{proof}
  The 2-form \(\omega\) defines a bilinear, skew-symmetric bracket on
  \(\DR^{0}(Q)\) by the usual prescription:
  \[ \{f,g\}\deq i_{\theta_{g}}(i_{\theta_{f}}(\omega))\,. \]
  This induces a bilinear and skew-symmetric bracket on the space of
  \(G_{\mathbf{n}}\)-invariant functions defined on every representation space
  \(\Rep_{\bb{k}}(Q,\mathbf{n})\). By known results, the latter brackets obey the
  Jacobi identity if and only if the induced 2-forms \(\hat{\omega}\) are
  closed, and this happens if and only if \(\de\omega=0\).

  On the other hand, the bracket defined on \(\DR^{0}(Q)\) by the noncommutative
  bivector \(\pi\) is the same as above, since
  \[ \pi(\de f, \de g) = \pair{\de g}{-\omega^{\sharp}(\de f)} =
  i_{\theta_{f}}(\de g) = i_{\theta_{f}}(-\omega^{\flat}(\theta_{g})) =
  -i_{\theta_{f}}(i_{\theta_{g}}(\omega)) = i_{\theta_{g}}(i_{\theta_{f}}(\omega))\,. \]
  It follows that the induced brackets on representation spaces obey the
  Jacobi identity if and only if the induced bivectors \(\check{\pi}\) are
  Poisson. By theorem \ref{teo:bie} this happens if and only if
  \([\pi,\pi]=0\).
\end{proof}
Hence every noncommutative symplectic structure on \(\bb{k}Q\) gives rise to a
unique double Poisson structure on \(\bb{K}Q\), exactly as in the classical
setting. We also have the following analogue of another well-known result.
\begin{teo}
  \label{compat-sympl}
  Suppose \(\omega\) is a noncommutative symplectic form on \(\bb{k}Q\) with
  associated double Poisson structure \(\pi_{0}\) and let \(\pi_{1}\) be
  another double Poisson structure on \(\bb{k}Q\). Then:
  \begin{enumerate}
  \item the endomorphism \(\app{N}{\Der(Q)}{\Der(Q)}\) defined by
    \(\tilde{\pi}_{1}\circ (-\omega^{\flat})\) is regular;
  \item if \(\pi_{0}\) and \(\pi_{1}\) are compatible then \(N\) is Nijenhuis
    and compatible with \(\pi_{0}\).
  \end{enumerate}
\end{teo}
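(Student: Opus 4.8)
The plan is to mirror the classical proof for $\omega N$ manifolds, translating each step into the algebraic language already set up in the excerpt. The essential observation is that since $\pi_0$ comes from the symplectic form $\omega$, we have $\tilde{\pi}_0 = -\omega^{\sharp}$ and hence $\tilde{\pi}_0^{-1} = -\omega^{\flat}$; thus the endomorphism $N = \tilde{\pi}_1 \circ (-\omega^{\flat}) = \tilde{\pi}_1 \circ \tilde{\pi}_0^{-1}$ is exactly the recursion operator, and its transpose should be $N^{*} = \tilde{\pi}_0^{-1} \circ \tilde{\pi}_1 = -\omega^{\flat} \circ \tilde{\pi}_1$. I would first establish part (1): to show $N$ is regular I must exhibit the derivation $\de^{N}\colon \bb{k}Q \to \Omega^{1}(Q)$ of definition \ref{def:reg}, equivalently show that the candidate $N^{*} = -\omega^{\flat}\circ \tilde{\pi}_1$ is a genuine $\bb{k}Q$-bimodule endomorphism of $\Omega^{1}(Q)$ satisfying the adjunction \eqref{eq:rel-N-Nst}. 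The adjunction is a short computation: $\pair{N^{*}(\alpha)}{\theta} = \pair{-\omega^{\flat}(\tilde{\pi}_1(\alpha))}{\theta} = i_{\theta}(i_{\tilde{\pi}_1(\alpha)}(-\omega)) = -i_{\tilde{\pi}_1(\alpha)}(i_{\theta}(\omega))$, and one compares this with $\pair{\alpha}{N(\theta)} = \pair{\alpha}{\tilde{\pi}_1(-\omega^{\flat}(\theta))} = \pi_1(\alpha, -i_{\theta}(\omega))$ using the skew-symmetry of $\pi_1$ and formula \eqref{eq:rel-pi-pit}. By the converse remark following definition \ref{def:reg}, once $N^{*}$ is checked to be $\bb{k}Q$-linear and to satisfy the adjunction, regularity of $N$ follows automatically.

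For part (2), assume $\pi_0$ and $\pi_1$ are compatible double Poisson structures, i.e. $\pi_0 + \pi_1$ is again a double Poisson structure, equivalently $[\pi_0, \pi_1] = 0$ in the Schouten bracket (using bilinearity of $[\cdot,\cdot]$ and $[\pi_0,\pi_0]=[\pi_1,\pi_1]=0$). The goal is to deduce $\mathcal{T}_{N} = 0$ and the two compatibility conditions of definition \ref{def:compat} between $N$ and $\pi_0$. The algebraic compatibility condition $N\circ \tilde{\pi}_0 = \tilde{\pi}_0\circ N^{*}$ is immediate: $N\circ\tilde{\pi}_0 = \tilde{\pi}_1\circ\tilde{\pi}_0^{-1}\circ\tilde{\pi}_0 = \tilde{\pi}_1$, while $\tilde{\pi}_0\circ N^{*} = \tilde{\pi}_0\circ\tilde{\pi}_0^{-1}\circ\tilde{\pi}_1 = \tilde{\pi}_1$. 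Moreover with this choice of $N$ one has $\pi_0^{N} = \pi_1$, since $\pi_0^{N}(\alpha,\beta) = \pi_0(N^{*}(\alpha),\beta) = \pair{\beta}{\tilde{\pi}_0(N^{*}(\alpha))} = \pair{\beta}{\tilde{\pi}_1(\alpha)} = \pi_1(\alpha,\beta)$.

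The substantive work is showing $\mathcal{T}_{N} = 0$ and $C_{(\pi_0,N)} = 0$. Here I would invoke the master identity \eqref{eq:id-ksm} from the proof of theorem \ref{mainth1}, which holds for any double Poisson structure and any algebraically compatible regular endomorphism, together with a second Kosmann-Schwarzbach--Magri identity expressing the Magri--Morosi concomitant $C_{(\pi_0,N)}$ in terms of the Schouten bracket $[\pi_0,\pi_1] = [\pi_0, \pi_0^{N}]$ and the torsion $\mathcal{T}_{N}$. Concretely, in \cite{ksm90} one has a relation of the schematic form: for a regular $N$ algebraically compatible with $\pi_0$, the vanishing of $[\pi_0, \pi_0^{N}]$ together with $\de\omega = 0$ forces $C_{(\pi_0,N)}$ to be controlled by $\mathcal{T}_{N^{*}}$, and conversely. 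Since the bracket on $\DR^{1}(Q)$ induced by $\pi_0$ is, via $\omega^{\sharp}$, transported from the commutator bracket on $\Der(Q)$ — this is exactly equation \eqref{eq:pit-mor} — the Nijenhuis torsion of $N$ on $\Der(Q)$ corresponds precisely to the Nijenhuis torsion of $N^{*}$ with respect to the $\pi_0$-bracket on $\DR^{1}(Q)$, and the closedness of $\omega$ makes that $\pi_0$-bracket isomorphic (through $\omega^{\flat}$) to the $\Der(Q)$ commutator. Chasing these identifications, compatibility of $\pi_0$ and $\pi_1$ translates into $\mathcal{T}_{N} = 0$, and then identity \eqref{eq:id-ksm} (read with the roles of torsion and concomitant interchanged, or applied directly) yields $C_{(\pi_0,N)} = 0$. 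I expect the main obstacle to be purely bookkeeping: verifying that the transport of brackets via $\omega^{\sharp}$ genuinely intertwines $\mathcal{T}_N$ on derivations with the concomitant/torsion data on $1$-forms, i.e. correctly matching the Cartan-calculus identities \eqref{eq:id-ld} and their deformed versions under $\omega^{\flat}$. As in the proof of the previous theorems, one may alternatively sidestep this by noting that all identities in question hold for the induced commutative $\omega N$ structures $(\hat{\omega}, \hat{N})$ on every representation space $\Rep_{\bb{k}}(Q,\mathbf{n})$ — where the classical $\omega N$ theorem applies verbatim — and then invoking theorems \ref{teor-discesa} and \ref{teo:bie} to conclude that they must already hold at the noncommutative level.
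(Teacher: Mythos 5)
Your part (1) is correct and is essentially the paper's own argument: one computes $\pair{-\omega^{\flat}(\tilde{\pi}_{1}(\alpha))}{\theta} = \pi_{1}(\alpha,i_{\theta}(\omega)) = \pair{\alpha}{N(\theta)}$ and invokes the remark following definition \ref{def:reg}; likewise your observations that $N\circ\tilde{\pi}_{0}=\tilde{\pi}_{0}\circ N^{*}$ and $\pi_{0}^{N}=\pi_{1}$ are fine. The problem is part (2), where the two substantive claims, $\mathcal{T}_{N}=0$ and $C_{(\pi_{0},N)}=0$, are not actually established. Your final step --- deducing $C_{(\pi_{0},N)}=0$ from identity \eqref{eq:id-ksm} --- cannot work: even granting that both torsions vanish, \eqref{eq:id-ksm} only yields the covariance relation $\pair{C_{(\pi_{0},N)}(N^{*}(\alpha),\beta)}{\theta} = \pair{C_{(\pi_{0},N)}(\alpha,\beta)}{N(\theta)}$, which does not force the concomitant to be zero. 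And the ``second Kosmann-Schwarzbach--Magri identity'' that is supposed to convert $[\pi_{0},\pi_{1}]=0$ into $\mathcal{T}_{N}=0$ is never written down; the bracket-transport narrative via $\omega^{\flat}$ does not by itself produce it (note also that the torsion of $N=\tilde{\pi}_{1}\circ\tilde{\pi}_{0}^{-1}$ involves $[\pi_{1},\pi_{1}]$ as well, so the hypothesis that $\pi_{1}$ is itself double Poisson must enter). The paper's proof rests on the explicit identity $\mathcal{T}_{N}(\theta,\eta)=2N\bigl([\pi_{0},\pi_{1}](\omega^{\flat}(\theta),\omega^{\flat}(\eta))\bigr)$, valid in the stated hypotheses, which gives torsion vanishing at once, and then obtains $C_{(\pi_{0},N)}=0$ by a separate direct computation --- not via \eqref{eq:id-ksm}. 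As it stands, your argument for part (2) is a description of where a proof should come from rather than a proof.

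The fallback you propose --- verify everything on the representation spaces, where the classical $\omega N$ theory applies, and ``lift back'' using theorems \ref{teor-discesa} and \ref{teo:bie} --- is also not available for these particular statements. Theorem \ref{teo:bie} is a converse only for the double Poisson condition on noncommutative bivectors; the paper defines no descent map for regular endomorphisms $N$ of $\Der(Q)$ and states no faithfulness result that would let you conclude $\mathcal{T}_{N}=0$ in $\Der(Q)$, or $C_{(\pi_{0},N)}=0$ in $\DR^{1}(Q)$, from the corresponding facts on all spaces $\Rep_{\bb{k}}(Q,\mathbf{n})$. That route is legitimate for bivector-level assertions (for instance, that $\pi_{0}^{N}$ is a double Poisson structure --- this is exactly how the paper proves theorem \ref{teo:lifts}), but Nijenhuis-ness of $N$ and the vanishing of the Magri--Morosi concomitant are not statements of that form, so this alternative leaves the same gap open.
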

\begin{proof}
  (1) It suffices to show that there exists a map \(\app{N^{*}}{\DR^{1}(Q)}
  {\DR^{1}(Q)}\) such that \(\pair{N^{*}(\alpha)}{\theta}\) equals
  \[ \pair{\alpha}{N(\theta)} =
  -\pair{\alpha}{\tilde{\pi}_{1}(\omega^{\flat}(\theta))} =
  -\pair{\alpha}{\tilde{\pi}_{1}(i_{\theta}(\omega))} =
  -\pi_{1}(i_{\theta}(\omega),\alpha) = \pi_{1}(\alpha,i_{\theta}(\omega)) \]
  for any \(\alpha\in \DR^{1}(Q)\), \(\theta\in \Der(Q)\). We claim that
  \(N^{*}\deq -\omega^{\flat}\circ \tilde{\pi}_{1}\) does the job: indeed,
  \[ \pair{-\omega^{\flat}(\tilde{\pi}_{1}(\alpha))}{\theta} =
  -i_{\theta}(i_{\tilde{\pi}_{1}(\alpha)}(\omega)) =
  i_{\tilde{\pi}_{1}(\alpha)}(i_{\theta}(\omega)) =
  \pair{i_{\theta}(\omega)}{\tilde{\pi}_{1}(\alpha)} =
  \pi_{1}(\alpha,i_{\theta}(\omega))\,, \]
  as we wanted.

  (2) The first assertion follows from the following identity which, in the
  hypotheses stated, relates the torsion of \(N\) to the Schouten bracket of
  \(\pi_{0}\) and \(\pi_{1}\):
  \[ \mathcal{T}_{N}(\theta,\eta) =
  2N([\pi_{0},\pi_{1}](\omega^{\flat}(\theta),\omega^{\flat}(\eta)))\,. \]
  By a direct computation one then shows that \(C_{(\pi,N)}=0\).
\end{proof}
As is customary, a PN structure in which one of the two Poisson bivectors
comes from a symplectic form will be called a \(\omega N\) structure.

\subsection{Noncommutative lifts}
\label{lifts}

Now we would like to reinterpret in our setting the construction of compatible
Poisson brackets on cotangent bundles introduced in \cite{tur92} and used
in  \cite{bfmop10} to obtain the bihamiltonian structure of the Calogero-Moser system.

The noncommutative analogue of cotangent bundles are double quivers, so let us
consider a quiver \(Q\) and its double \(\ol{Q}\), where for each arrow \(a\)
we denote its opposite by \(a^{*}\). We have the tautological 1-form
\(\lambda\in \DR^{1}(\ol{Q})\) represented by the expression \(\sum_{a\in Q}
a^{*}\de a\) and the corresponding canonical symplectic form \(\omega =
\de\lambda\) in \(\DR^{2}(\ol{Q})\), represented by
\[ \omega = \sum_{a\in Q} \de a^{*}\de a \,.\]
Consider now a regular endomorphism \(\app{L}{\Der(Q)}{\Der(Q)}\) on the path
algebra of \(Q\). We define the following deformation of \(\lambda\),
\[ \lambda_{L}\deq \sum_{a\in Q} a^{*}L^{*}(\de a) = \sum_{a\in Q} a^{*}\de^{L}a\,, \]
and denote by \(\omega_{L}\) its differential (which is not a symplectic form
in general).

The \emph{complete lift} of \(L\) to the double \(\ol{Q}\) is the map
\(\app{N}{\Der(\ol{Q})}{\Der(\ol{Q})}\) defined by
\begin{equation}
  \label{eq:def-lift}
  \theta\mapsto \omega^{\sharp}(i_{\theta}(\omega_{L}))\,.
\end{equation}
In other words, \(N(\theta)\) is the unique derivation of \(\bb{k}\ol{Q}\)
such that
\[ i_{N(\theta)}(\omega) = i_{\theta}(\omega_{L})\,. \]
\begin{lem}
  \(N\) is a regular endomorphism of \(\Der(\ol{Q})\).
\end{lem}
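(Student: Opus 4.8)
The plan is to construct, as required by definition~\ref{def:reg}, a $B$-linear derivation $\app{\de^{N}}{\bb{k}\ol{Q}}{\Omega^{1}(\ol{Q})}$ satisfying $i_{\theta}\circ\de^{N}=N(\theta)$ for every $\theta\in\Der(\ol{Q})$. Such a derivation is uniquely determined by its values on the arrows $a,a^{*}$ ($a\in Q$) generating $\bb{k}\ol{Q}$, and since both $i_{\theta}\circ\de^{N}$ and $N(\theta)$ are derivations of $\bb{k}\ol{Q}$, it suffices to produce, for each such generator $x$, a $1$-form $\de^{N}x\in\Omega^{1}(\ol{Q})$ with $i_{\theta}(\de^{N}x)=N(\theta)(x)$ for all $\theta$, and then extend by the Leibniz rule.

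To find and verify the right candidate at once I would argue backwards. Recall that $N(\theta)$ is characterised by $i_{N(\theta)}(\omega)=i_{\theta}(\omega_{L})$, that $i_{\psi}(\de x)=\psi(x)$ for any derivation $\psi$, and that for any $2$-form $\sigma\in\Omega^{2}(\ol{Q})$ and all $u,v\in\Der(\ol{Q})$ one has the elementary identity $i_{u}i_{v}\sigma=-i_{v}i_{u}\sigma$ (immediate from the defining formula for $i_{\theta}$). Using in addition that $\omega=\sum_{a\in Q}\de a^{*}\de a$ yields $i_{\partial_{a^{*}}}(\omega)=\de a$ and $i_{\partial_{a}}(\omega)=-\de a^{*}$, one computes
\begin{multline*}
  N(\theta)(a)=i_{N(\theta)}(\de a)=i_{N(\theta)}\bigl(i_{\partial_{a^{*}}}(\omega)\bigr)
  =-i_{\partial_{a^{*}}}\bigl(i_{N(\theta)}(\omega)\bigr)=\\
  =-i_{\partial_{a^{*}}}\bigl(i_{\theta}(\omega_{L})\bigr)=i_{\theta}\bigl(i_{\partial_{a^{*}}}(\omega_{L})\bigr)\,,
\end{multline*}
and likewise $N(\theta)(a^{*})=i_{\theta}\bigl(-i_{\partial_{a}}(\omega_{L})\bigr)$, for every $\theta$. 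Hence the assignment
\[
  \de^{N}a\deq i_{\partial_{a^{*}}}(\omega_{L})\,,\qquad
  \de^{N}a^{*}\deq -i_{\partial_{a}}(\omega_{L})
\]
(where $\omega_{L}=\de\lambda_{L}$ is given its natural representative in $\Omega^{2}(\ol{Q})$, so that the right-hand sides are honest $1$-forms) satisfies $i_{\theta}(\de^{N}x)=N(\theta)(x)$ on the generators, hence everywhere after the Leibniz extension, which is exactly regularity. Note in passing that $\de^{N}a=\de^{L}a$, since $\partial_{a^{*}}$ annihilates every arrow of $Q$ and therefore kills all but the $\sum_{b}\de b^{*}\,\de^{L}b$ summand of $\omega_{L}$.

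The two points that require genuine care — and where I expect the only real work — are the following. First, the idempotent bookkeeping: one must check the contractions $i_{\partial_{a^{*}}}(\omega)=\de a$, $i_{\partial_{a}}(\omega)=-\de a^{*}$ and verify that the proposed $\de^{N}a$, $\de^{N}a^{*}$ lie in the correct $e_{h(\cdot)}\,\Omega^{1}(\ol{Q})\,e_{t(\cdot)}$ components, so that the Leibniz extension is consistent. Second, and more delicate, the relation $i_{N(\theta)}(\omega)=i_{\theta}(\omega_{L})$ is a priori an equality in $\DR^{1}(\ol{Q})$ rather than in $\Omega^{1}(\ol{Q})$, since $\omega^{\sharp}$ is defined on $\DR^{1}$; so one must either pin down representatives in $\Omega^{1}(\ol{Q})$ making it hold on the nose, or else correct each $\de^{N}x$ by an element of $[\bb{k}\ol{Q},\Omega^{1}(\ol{Q})]$ — which alters no contraction $i_{\partial_{b}}$ modulo $[\bb{k}\ol{Q},\bb{k}\ol{Q}]$ and can always be chosen so as to repair the exact equality, because $i_{\partial_{b}}$ maps $[\bb{k}\ol{Q},\Omega^{1}(\ol{Q})]$ onto the subspace of $[\bb{k}\ol{Q},\bb{k}\ol{Q}]$ in which the discrepancy sits. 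Finally, observe that closedness of $\omega_{L}$ (which holds, as $\omega_{L}=\de\lambda_{L}$) is nowhere used for regularity: it will matter only when one asks whether the complete lift $N$ has vanishing Nijenhuis torsion.
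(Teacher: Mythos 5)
Your strategy (build $\de^{N}$ directly on the generators and check it against $N(\theta)$) founders at its central step, and the patch you propose for it is not a proof. The chain of equalities you use to identify $\de^{N}x$ has exactly one inexact link: the defining relation $i_{N(\theta)}(\omega)=i_{\theta}(\omega_{L})$ holds only in $\DR^{1}(\ol{Q})$, i.e.\ up to an element of $[\bb{k}\ol{Q},\Omega^{1}(\ol{Q})]$ which depends on $\theta$, so after contracting with $\partial_{a}$ or $\partial_{a^{*}}$ your candidate satisfies $i_{\theta}(\de^{N}x)=N(\theta)(x)$ only modulo $[\bb{k}\ol{Q},\bb{k}\ol{Q}]$, whereas definition~\ref{def:reg} requires equality in $\bb{k}\ol{Q}$. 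This is not a hypothetical worry: in the paper's own Calogero--Moser example one has $\omega_{L}=\de(a^{*}a)\,\de a$, so your prescription gives $\de^{N}a^{*}=-i_{\partial_{a}}(\omega_{L})=\de a^{*}\,a$, hence $i_{\theta}(\de^{N}a^{*})=\theta(a^{*})a$, while equation~\eqref{eq:Ncm} gives $N(\theta)(a^{*})=\theta(a^{*})a+[a^{*},\theta(a)]$; the two differ by a nonzero commutator (the correct value is $\de^{N}a^{*}=\de a^{*}\,a+a^{*}\de a-\de a\, a^{*}$). So the assertion that your assignment ``satisfies $i_{\theta}(\de^{N}x)=N(\theta)(x)$ on the generators'' is simply false as stated.

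You do flag this issue, but the repair you sketch does not close it. The discrepancy is a linear map $\theta\mapsto c_{x}(\theta)\in[\bb{k}\ol{Q},\bb{k}\ol{Q}]$, and to fix $\de^{N}x$ you need a \emph{single} correction form $\gamma_{x}$, independent of $\theta$, with $i_{\theta}(\gamma_{x})=c_{x}(\theta)$ for all $\theta$ simultaneously; the surjectivity of $i_{\partial_{b}}$ from $[\bb{k}\ol{Q},\Omega^{1}(\ol{Q})]$ onto $[\bb{k}\ol{Q},\bb{k}\ol{Q}]$ only lets you realize one value at a time. Whether a $\theta$-dependent assignment is representable by contraction against a fixed form is precisely what regularity means (the paper stresses that not every endomorphism of $\Der(\ol{Q})$ has this property), so the repair as phrased is essentially circular. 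The paper takes a different route that stays consistently at the level where the defining relation is valid: it exhibits the transpose $N^{*}(\beta)\deq i_{\omega^{\sharp}(\beta)}(\omega_{L})$, verifies $\pair{N^{*}(\alpha)}{\theta}=\pair{\alpha}{N(\theta)}$ by the same contraction identities you use (legitimately, because there everything is computed modulo commutators), and then invokes the remark following definition~\ref{def:reg} that an endomorphism admitting a transpose is regular. To salvage your direct construction you would instead have to compute the canonical form of $i_{\theta}(\omega_{L})$ explicitly, keeping track of where $\theta(a)$ and $\theta(a^{*})$ sit inside each word, and read off $\de^{N}$ from that — which is exactly the work your current argument elides.
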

\begin{proof}
  It suffices to show that \(N\) has a transpose. Let us write the generic
  1-form in \(\DR^{1}(\ol{Q})\) as \(\omega^{\flat}(\eta)\), with \(\eta\in
  \Der(\ol{Q})\). We need a map \(\app{N^{*}}{\DR^{1}(\ol{Q})}{\DR^{1}(\ol{Q})}\) such that
  \(\pair{N^{*}(\omega^{\flat}(\eta))}{\theta}\) equals
  \[ \pair{\omega^{\flat}(\eta)}{N(\theta)} = i_{N(\theta)}(i_{\eta}(\omega))
  = -i_{\eta}(i_{N(\theta)}(\omega)) = -i_{\eta}(i_{\theta}(\omega_{L})) =
  i_{\theta}(i_{\eta}(\omega_{L})) \]
  for every \(\theta\in \Der(\ol{Q})\). Clearly then we should take
  \[ N^{*}(\omega^{\flat}(\eta)) = i_{\eta}(\omega_{L})\,, \]
  that is, \(N^{*}(\beta)\deq i_{\omega^{\sharp}(\beta)}(\omega_{L})\) for
  every \(\beta\in \DR^{1}(\ol{Q})\).
\end{proof}
Let us consider now the bivector defined by the map
\[ \tilde{\pi}_{1}\deq -N\circ \omega^{\sharp}\,. \]
Explicitly, one has
\[ \pi_{1}(\alpha,\beta) = i_{\omega^{\sharp}(\beta)}(i_{\omega^{\sharp}(\alpha)}(\omega_{L}))\,, \]
as may be verified by a direct computation. This is the noncommutative version
of the bivector considered in \cite{tur92,imm00}.
\begin{teo}
  \label{teo:lifts}
  If \(\mathcal{T}_{L}=0\) then the bivector \(\pi_{1}\) is Poisson and
  compatible with the canonical Poisson structure.
\end{teo}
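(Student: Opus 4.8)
The plan is to derive the statement from theorem \ref{mainth1}: I will show that the pair $(\pi_{0},N)$, where $\pi_{0}$ is the double Poisson structure attached to the canonical symplectic form $\omega$ and $N$ is the complete lift of $L$, is a Poisson--Nijenhuis structure on $\bb{k}\ol{Q}$, and that the bivector it produces through \eqref{eq:def-pi-N} is precisely $\pi_{1}$. Granting this, theorem \ref{mainth1} immediately yields that $\pi_{1}=\pi_{0}^{N}$ is a double Poisson structure compatible with $\pi_{0}$, which is the assertion (the canonical Poisson structure being $\pi_{0}$).

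The first point, algebraic compatibility, is essentially a restatement of the definition of $N$. Since $\tilde{\pi}_{0}=-\omega^{\sharp}$, algebraic compatibility means $N\circ\omega^{\sharp}=\omega^{\sharp}\circ N^{*}$, equivalently $N^{*}\circ\omega^{\flat}=\omega^{\flat}\circ N$. Using the formula for the transpose obtained in the preceding lemma, $N^{*}(\beta)=i_{\omega^{\sharp}(\beta)}(\omega_{L})$, one computes for every $\theta\in\Der(\ol{Q})$
\[
N^{*}(\omega^{\flat}(\theta))=i_{\omega^{\sharp}(\omega^{\flat}(\theta))}(\omega_{L})=i_{\theta}(\omega_{L})=i_{N(\theta)}(\omega)=\omega^{\flat}(N(\theta))\,,
\]
so the identity holds. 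Consequently $\tilde{\pi}_{1}=-N\circ\omega^{\sharp}=N\circ\tilde{\pi}_{0}=\tilde{\pi}_{0}\circ N^{*}$, which identifies $\pi_{1}$ with the bivector $\pi_{0}^{N}$ of \eqref{eq:def-pi-N}.

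Next I would check differential compatibility, $C_{(\pi_{0},N)}=0$. The key remark is that the $2$-form $\omega_{L}=\de\lambda_{L}$ satisfies $\omega_{L}(\theta,\eta)=i_{\eta}i_{N(\theta)}(\omega)=\omega(N(\theta),\eta)$, so it plays the role of the deformed symplectic form of the classical $\omega N$ theory, and it is \emph{exact}, hence closed. Expanding the definition of the Magri--Morosi concomitant and using $\tilde{\pi}_{0}=-\omega^{\sharp}$, the Cartan identities, $\de\omega=0$ and $\de\omega_{L}=0$, one verifies by a direct (and torsion-free) computation that $C_{(\pi_{0},N)}$ vanishes; this is the noncommutative form of the fact that for an $\omega N$-candidate with symplectic first structure the concomitant reduces to the closedness of the deformed form.

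The substantial step is showing that $N$ has vanishing Nijenhuis torsion --- the noncommutative analogue of the classical statement that the complete lift of a torsionless $(1,1)$-tensor is torsionless. I would make $N$ explicit on the basic derivations of $\ol{Q}$ from $i_{N(\theta)}(\omega)=i_{\theta}(\omega_{L})$: writing $\omega=\sum_{a}\de a^{*}\,\de a$ and, using $\de\circ\de^{L}+\de^{L}\circ\de=0$, $\omega_{L}=\de\lambda_{L}=\sum_{a}\bigl(\de a^{*}\,\de^{L}a-a^{*}\,\de^{L}\de a\bigr)$, one finds that on the ``fibre'' derivations (dual to the arrows $a^{*}$) $N$ acts through $L$ alone, while on the ``base'' derivations (dual to the arrows $a$) it acts through $L$ together with a ``second-order'' correction governed by $\de^{L}\de a$. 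Substituting these into $\mathcal{T}_{N}$ and simplifying, the remaining terms assemble into an expression built from $\de^{L}\circ\de^{L}$ applied to the arrows, which vanishes because $\mathcal{T}_{L}=0$ implies $\de^{L}\circ\de^{L}=0$. I expect the organization of this computation --- controlling the noncommutative ordering and Koszul signs in $\omega_{L}$ and collecting the many terms so that $\de^{L}\circ\de^{L}$ emerges --- to be the main obstacle. An alternative, in the spirit of the descent arguments used earlier, is to pass to the representation spaces: $\Rep_{\bb{k}}(\ol{Q},\mathbf{n})$ is the total space of $T^{*}\Rep_{\bb{k}}(Q,\mathbf{n})$, $\check{N}$ is the ordinary complete lift of the $G_{\mathbf{n}}$-invariant $(1,1)$-tensor induced by $L$, and $\mathcal{T}_{\check{N}}=0$ by the classical result of \cite{tur92,imm00}; since the $\check{}$ construction intertwines Schouten brackets and pairings (theorem \ref{teor-discesa}) and detects the vanishing of $\mathcal{T}_{N}$ across all dimension vectors, this forces $\mathcal{T}_{N}=0$. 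In either case $(\pi_{0},N)$ is then a Poisson--Nijenhuis structure, and theorem \ref{mainth1} concludes the proof.
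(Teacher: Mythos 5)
Your overall strategy --- verify that $(\pi_{0},N)$ is a noncommutative PN structure and invoke theorem \ref{mainth1} --- is genuinely different from the paper's proof, and the part you actually complete (algebraic compatibility $N\circ\tilde{\pi}_{0}=\tilde{\pi}_{0}\circ N^{*}$ and the identification $\pi_{1}=\pi_{0}^{N}$) is correct. But the two hypotheses of theorem \ref{mainth1} that carry all the weight are not established. The vanishing of the torsion of the complete lift is only sketched: you say the terms ``assemble into an expression built from $\de^{L}\circ\de^{L}$'' and yourself flag the bookkeeping as the main obstacle, so the substantial step of your chosen route is missing. Likewise $C_{(\pi_{0},N)}=0$ is asserted by analogy with the classical $\omega N$ theory (``closedness of the deformed form'') rather than proved in the noncommutative calculus. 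Your fallback argument for $\mathcal{T}_{N}=0$ by descent does not close the gap either: the faithfulness result available in the paper, theorem \ref{teo:bie}, concerns bivectors and the Schouten bracket only; nothing in the paper lets you conclude $\mathcal{T}_{N}=0$ on $\bb{k}\ol{Q}$ from the vanishing of the torsions of the induced $(1,1)$-tensors on all representation spaces, and you would in any case have to verify that an arbitrary regular endomorphism descends to representation spaces and that $\check{N}$ is the classical complete lift.

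The paper avoids these issues entirely: it descends the \emph{bivector} $\pi_{1}$ (not $N$) to every $\Rep_{\bb{k}}(Q,\mathbf{n})$, quotes the classical results of Turiel and Ibort--Magri--Marmo to get that the induced brackets satisfy the Jacobi identity and are compatible with the canonical one when $L$ is torsionless, and then lifts back using theorem \ref{teo:bie} (applied to $\pi_{1}$ and to the compatibility). In particular the paper never needs $\mathcal{T}_{N}=0$ or $C_{(\pi_{0},N)}=0$ to prove theorem \ref{teo:lifts}; those properties of $N$ are obtained only a posteriori, via theorem \ref{compat-sympl}, once compatibility of the two double Poisson structures is known. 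If you reroute your descent idea so that what you transport downstairs and back up are the statements $[\pi_{1},\pi_{1}]=0$ and $[\pi_{0},\pi_{1}]=0$, your argument becomes the paper's; as written, it rests on an unproven lifting principle for Nijenhuis torsion, or on a direct computation you have not carried out.
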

\begin{proof}
  The noncommutative bivector \(\pi_{1}\) induces a genuine bivector
  \(\check{\pi}_{1}\) on every representation space
  \(\Rep_{\bb{k}}(Q,\mathbf{n})\), and consequently a bracket on
  \(G_{\mathbf{n}}\)-invariant regular functions. The proofs in \cite{tur92}
  and \cite{imm00} then show that these brackets obey the Jacobi identity when
  \(L\) is torsionless. This means that \(\pi_{1}\) induces a Poisson bivector
  on every representation space, hence theorem \ref{teo:bie} implies that
  \(\pi_{1}\) is a double Poisson structure on \(\bb{k}\ol{Q}\). Compatibility
  with \(\pi_{0}\) then follows from the corresponding property of induced
  bivectors.
\end{proof}
We conclude that a Nijenhuis tensor \(L\) on \(\bb{k}Q\) is enough to induce a
\(\omega N\) structure on the path algebra \(\bb{k}\ol{Q}\). It is important
to emphasize that the Nijenhuis tensors obtained by this lifting process are
quite special: for example the action of \(N(\theta)\) on an arrow \(a\in Q\)
cannot involve any of the arrows in \(\ol{Q}\setminus Q\). As we shall see in
the next section, this is often a serious limitation.

\section{Examples and applications}
\label{sec:ex}

\subsection{Rational Calogero-Moser system}
\label{CM}

As a first example let us consider the noncommutative \(\omega N\) manifold that
underlies the phase spaces of the rational Calogero-Moser systems.

Let \(Q_{\circ}\) be the quiver with one vertex and one loop \(a\) and denote
by \(\ol{Q}_{\circ}\) its double (which has an additional loop \(a^{*}\)). The
corresponding path algebra is the free associative algebra on the two
generators \(a\) and \(a^{*}\):
\[ \bb{k}\ol{Q}_{\circ} = \falg{\bb{k}}{a,a^{*}}\,. \]
The tautological 1-form on this path algebra is \(\lambda = a^{*}\de a\) and
the canonical symplectic form reads
\begin{equation}
  \label{eq:symp-cm}
  \omega = \de a^{*}\, \de a\,.
\end{equation}
The map \(\app{\omega^{\sharp}}{\DR^{1}(\ol{Q}_{\circ})}{\Der(\ol{Q}_{\circ})}\)
acts as follows: given a 1-form \(\alpha\in \DR^{1}(\ol{Q}_{\circ})\)
represented by \(\sum_{c\in \ol{Q}_{\circ}} S_{c}\de c\),
\begin{equation}
  \label{eq:om-sh-cm}
  \omega^{\sharp}(\alpha) = -S_{a^{*}}\partial_{a} + S_{a}\partial_{a^{*}}
\end{equation}
so that the Poisson bivector associated to \(\omega\) is simply
\[ \pi_{0} = [\partial_{a^{*}},\partial_{a}]\,. \]
Following \cite{bfmop10}, let us consider the endomorphism
\(\app{L}{\Der(Q_{\circ})}{\Der(Q_{\circ})}\) defined as follows: for every
\(\theta\in \Der(Q_{\circ})\), \(L(\theta)\) is the unique derivation of
\(\bb{k}Q_{\circ}\) mapping \(a\) to \(a\theta(a)\). It is straightforward to
verify that \(L\) is regular (its transpose being given by \(L^{*}(\de a) =
a\de a\)) and a Nijenhuis tensor on \(\bb{k}Q_{\circ}\). The corresponding
deformed tautological 1-form on \(\bb{k}\ol{Q}_{\circ}\) is
\begin{equation}
  \label{eq:taut-cm}
  \lambda_{L} = a^{*}a\de a
\end{equation}
with differential
\[ \omega_{L} = \de\lambda_{L} = \de a^{*}\, a\de a + a^{*}\de a\de a\,. \]
The complete lift of \(L\), as defined by equation \eqref{eq:def-lift}, is
then obtained as follows. First we notice that
\begin{align*}
  i_{\theta}(\omega_{L})) &= \theta(a^{*})a\de a - \de a^{*}\, a\theta(a) +
  a^{*}\theta(a)\de a - a^{*}\de a\, \theta(a)\\
  &= (\theta(a^{*})a + a^{*}\theta(a) - \theta(a)a^{*})\de a - a\theta(a) \de
  a^{*}\,,
\end{align*}
where the second equality holds in \(\DR^{1}(\ol{Q}_{\circ})\). Then, using
the expression \eqref{eq:om-sh-cm} for \(\omega^{\sharp}\), we get
\[ N(\theta) = \omega^{\sharp}(i_{\theta}(\omega_{L})) =
a\theta(a)\partial_{a} + (\theta(a^{*})a + a^{*}\theta(a) -
\theta(a)a^{*})\partial_{a^{*}} \]
or, more compactly,
\begin{equation}
  \label{eq:Ncm}
  N(\theta)(a,a^{*}) = (a\theta(a), [a^{*},\theta(a)] + \theta(a^{*})a)\,.
\end{equation}
Theorem \ref{teo:lifts} then implies that the map
\[ \tilde{\pi}_{1}(S_{a}\de a + S_{a^{*}}\de a^{*}) = N(S_{a^{*}}\partial_{a} - S_{a}\partial_{a^{*}}) =
aS_{a^{*}}\partial_{a} + (a^{*}S_{a^{*}} - S_{a^{*}}a^{*} - S_{a}a)\partial_{a^{*}} \]
defines a double Poisson structure on \(\bb{k}\ol{Q}_{\circ}\). Explicitly,
the corresponding bivector \(\pi_{1}\in \mathcal{V}^{2}\ol{Q}_{\circ}\) reads
\begin{equation}
  \label{eq:pi1-cm}
  \pi_{1} = [a\partial_{a^{*}},\partial_{a}] +
  [a^{*}\partial_{a^{*}},\partial_{a^{*}}]\,.
\end{equation}
Readers of \cite{bie13,ors13} will recognize the previous expression as the
linear Poisson bivector on \(\falg{\bb{k}}{a,a^{*}}\) induced by an
appropriate associative algebra structure on \(\bb{k}^{2}\).

Let us compute the next double Poisson structure in the hierarchy. We have
\begin{align*}
  \tilde{\pi}_{2}(S_{a}\de a + S_{a^{*}}\de a^{*}) &= N(\tilde{\pi}_{1}(S_{a}\de a + S_{a^{*}}\de a^{*}))\\
  &= N(aS_{a^{*}}\partial_{a} + ([a^{*},S_{a^{*}}] - S_{a}a)\partial_{a^{*}})\\
  &= a^{2}S_{a^{*}}\partial_{a} + ([a^{*},aS_{a^{*}}] + [a^{*},S_{a^{*}}]a - S_{a}a^{2})\partial_{a^{*}}\,,
\end{align*}
which corresponds to the bivector
\begin{equation}
  \label{eq:p2-cm}
  \pi_{2} = [a^{2}\partial_{a^{*}},\partial_{a}] + [a^{*}a\partial_{a^{*}},\partial_{a^{*}}]
  + [a^{*}\partial_{a^{*}},a\partial_{a^{*}}]\,. 
\end{equation}
In general, we have
\[ \tilde{\pi}_{m}(S_{a}\de a + S_{a^{*}}\de a^{*}) = a^{m}S_{a^{*}}\partial_{a} +
\left(\sum_{i=1}^{m} [a^{*},a^{m-i}S_{a^{*}}]a^{i-1} - S_{a}a^{m}\right)\partial_{a^{*}} \]
whence
\[ \pi_{m} = [a^{m}\partial_{a^{*}},\partial_{a}] + \sum_{i=1}^{m}
[a^{*}a^{m-i}\partial_{a^{*}}, a^{i-1}\partial_{a^{*}}]\,. \]
In particular the Poisson brackets on \(\DR^{0}(\ol{Q}_{\circ})\) determined
by the \(m\)-th Poisson structure read as follows:
\begin{equation}
  \label{eq:bracket-m}
  \{f,g\}_{m} = a^{m} \left( \frac{\partial f}{\partial a^{*}} 
    \frac{\partial g}{\partial a} - \frac{\partial g}{\partial a^{*}}
    \frac{\partial f}{\partial a}\right) + \sum_{i=1}^{m} a^{*} a^{m-i}
  \left( \frac{\partial f}{\partial a^{*}} a^{i-1}\frac{\partial g}{\partial a^{*}}
    - \frac{\partial g}{\partial a^{*}} a^{i-1}\frac{\partial f}{\partial a^{*}}\right)
\end{equation}
where \(\frac{\partial}{\partial a}\) denotes the \emph{necklace derivative}
with respect to the arrow \(a\) \cite{kont93,ginz01,blb02}.

Consider now the following family of necklace words in \(\DR^{0}(\ol{Q}_{\circ})\):
\begin{equation}
  \label{eq:def-Ik}
  I_{k} = \frac{1}{k}a^{k} \qquad (k\geq 1)\,.
\end{equation}
As is immediate to verify, these regular functions on \(\bb{k}\ol{Q}_{\circ}\)
are in involution with respect to every bracket of the hierarchy
\eqref{eq:bracket-m}. Moreover, one has
\[ \tilde{\pi}_{1}(\de I_{k}) = \tilde{\pi}_{1}(a^{k-1}\de a) =
-a^{k-1}a\partial_{a^{*}} = -a^{k}\partial_{a^{*}} \]
and
\[ \tilde{\pi}_{0}(\de I_{k+1}) = \tilde{\pi}_{0}(a^{k}\de a) = -a^{k}\partial_{a^{*}} \]
so that the functions \(I_{k}\) form a \emph{Lenard chain}, that is
\[ \tilde{\pi}_{1}(\de I_{k}) = \tilde{\pi}_{0}(\de I_{k+1})\,. \]
For later use, let us define also the following additional set of regular
functions on \(\bb{k}\ol{Q}_{\circ}\):
\begin{equation}
  \label{eq:def-Jl}
  J_{\ell}\deq a^{\ell-1}a^{*} \qquad (\ell\geq 1)\,.
\end{equation}
Taken together, the \(I_{k}\) and \(J_{\ell}\) span a Lie subalgebra of
\(\DR^{0}(\ol{Q}_{\circ})\) with respect to each one of the brackets
\eqref{eq:bracket-m}. Indeed, the following relations hold for every
\(k,\ell\geq 1\) and \(m\geq 0\):
\begin{equation}
  \begin{array}{ccc}
    \{I_{k},I_{\ell}\}_{m} = 0 & \{J_{\ell},I_{k}\}_{m} = (k+\ell+m-2)I_{k+\ell+m-2} &
    \{J_{k},J_{\ell}\}_{m} = (\ell-k)J_{k+\ell+m-2}
  \end{array}
\end{equation}
(with the exception that \(\{J_{1},I_{1}\}_{0} = 1\)). 

In order to relate the above constructions with the dynamics of the rational
Calogero-Moser system let us descend to the space of real representations of
the quiver \(\ol{Q}_{\circ}\) with dimension vector \(\mathbf{n} = (n)\) for
some \(n\in \bb{N}\). This is simply the linear space of pairs of \(n\times
n\) real matrices,
\begin{equation}
  \label{eq:rep-cm}
  \Rep_{\bb{R}}(\ol{Q}_{\circ},(n)) = \Mat_{n\times n}(\bb{R})\oplus
  \Mat_{n\times n}(\bb{R})\,,
\end{equation}
which can be identified with the cotangent bundle \(T^{*}\Mat_{n\times n}
(\bb{R})\) in the obvious way. The group \(G_{\mathbf{n}}\) defined by
\eqref{eq:group} coincides with \(\PGL_{n}(\bb{R})\) and its action on the
space \eqref{eq:rep-cm} is Hamiltonian with respect to the (canonical)
symplectic form \(\hat{\omega}\) on \(T^{*}\Mat_{n\times n}(\bb{R})\) induced
by the noncommutative symplectic form \eqref{eq:symp-cm}.
As well known (see e.g. \cite[Chapter 2]{etin06}), the phase space of the
rational \(n\)-particle Calogero-Moser system may then be recovered as a
suitable symplectic quotient of the manifold \((T^{*}\Mat_{n\times n}
(\bb{R}),\hat{\omega})\), and the \(\PGL_{n}(\bb{R})\)-invariant functions
on \(\Rep_{\bb{R}}(\ol{Q}_{\circ},(n))\) induced by the noncommutative
functions \eqref{eq:def-Ik},
\[ \hat{I}_{k}(X,Y) = \frac{1}{k}\tr X^{k}\,, \]
give exactly the usual Calogero-Moser Hamiltonians once projected on this
quotient space. (Of course, on the resulting finite-dimensional manifold only
the first \(n\) of them will be functionally independent.) 

This procedure is not appropriate in the present setting, however, as the
quotient map obtained by the symplectic reduction process cannot be used to
reduce the second Poisson structure \(\check{\pi}_{1}\) on
\(\Rep_{\bb{R}}(\ol{Q}_{\circ},(n))\). To do this we need to replace the ordinary
symplectic reduction with a ``bihamiltonian reduction''.

As described in detail in \cite{bfmop10}, a reduction of this kind is
naturally viewed as a two-step process. In the first step one factors out the
action of \(\PGL_{n}(\bb{R})\) on \(\Rep_{\bb{R}}(\ol{Q}_{\circ},(n))\),
landing on a certain \((n^{2}+1)\)-dimensional manifold \(\mathcal{P}\). In
the second step one further reduces the resulting dynamics on the
\(2n\)-dimensional manifold defined by the image of the submersion
\(\app{\pi}{\mathcal{P}}{\bb{R}^{2n}}\) whose components are the invariant
functions \(\hat{I}_{1}, \dots, \hat{I}_{n}\) and \(\hat{J}_{1}, \dots,
\hat{J}_{n}\) (where \(\hat{J}_{\ell}(X,Y) = \tr X^{\ell-1}Y\)). This makes it
possible to recover, for any fixed \(n\), the phase space of the (attractive)
\(n\)-particles rational Calogero-Moser system with its associated
bihamiltonian structure.
\begin{rem}
  We believe that a similar quotient can be constructed also starting from the
  space of \emph{complex} representation of the quiver \(\ol{Q}_{\circ}\).
  This generalization is needed in order to get the dynamics of the repulsive
  Calogero-Moser system.
\end{rem}
\begin{rem}
  Unfortunately it is not easy to write the second Poisson structure in the
  reduced (or ``physical'') coordinates. As a matter of fact, the
  transformation relating the functions \((\hat{I}_{1}, \dots, \hat{I}_{n},
  \hat{J}_{1}, \dots, \hat{J}_{n})\) to the canonical Calogero-Moser
  coordinates \((q_{1}, \dots, q_{n}, p_{1}, \dots, p_{n})\) is notoriously
  hard to invert. In \cite{mm96} and \cite{bfmop10} only the 2-particle case
  is considered; more recently the 3-particle case has been studied in
  \cite{ar12}.
\end{rem}
\begin{rem}
  In the literature concerning the construction of the Calogero-Moser phase
  space by symplectic reduction it is more customary to take as Hamiltonians
  the functions \(\hat{H}_{k}(X,Y) = \frac{1}{k}\tr Y^{k}\) and
  \(\hat{K}_{\ell}(X,Y) = \tr Y^{\ell-1}X\), which are induced by the
  following necklace words in \(\DR^{0}(\ol{Q}_{\circ})\):
  \[ H_{k} = \frac{1}{k}a^{*k} \quad\text{ and }\quad K_{\ell} = a^{*\ell-1}a\,. \]
  These functions also define a bihamiltonian system on
  \(\bb{k}\ol{Q}_{\circ}\) (and consequently on each representation space) if
  we replace the Nijenhuis tensor \eqref{eq:Ncm} with
  \begin{equation}
    \label{eq:alt-Ncm}
    N(\theta)(a,a^{*}) = ([\theta(a^{*}),a] + a^{*}\theta(a), \theta(a^{*})a^{*})\,,
  \end{equation}
  in which case the second Poisson structure turns out to be
  \[ \pi_{1} = [a^{*}\partial_{a^{*}},\partial_{a}] +
  [a\partial_{a},\partial_{a}]\,. \]
  In this paper we stuck to the choice \eqref{eq:Ncm} in order to ease the
  comparison between our formulas and the corresponding ones in reference
  \cite{bfmop10}. Notice in this respect that the Nijenhuis tensor
  \eqref{eq:alt-Ncm} cannot be obtained by a lifting process of the kind
  discussed in subsection \ref{lifts}.
\end{rem}

\subsection{Gibbons-Hermsen system}
\label{GH}

As a second example we shall consider a noncommutative \(\omega N\) manifold
related to a family of integrable systems introduced by Gibbons and Hermsen in
\cite{gh84}. These systems are a generalization of the rational Calogero-Moser
system in which each particle has some additional degrees of freedom
parametrized by a vector-covector pair living in a linear space of dimension
\(r>1\) (the case \(r=1\) corresponds to Calogero-Moser). For the sake of
notational clarity we shall consider only the case \(r=2\); however, the
generalization to higher-rank cases does not present any essentially new
difficulty.

Let \(Q\) be the quiver
\begin{equation}
  \label{ghquiver}
  \xymatrix{
    \bullet \ar@(ul,dl)[]_{a} \ar@/_0.5pc/[r]_{y} & \bullet \ar@/_0.5pc/[l]_{x}}
\end{equation}
introduced by Bielawski and Pidstrygach in \cite{bp08}. On the path algebra of
its double \(\ol{Q}\) we have the tautological 1-form
\begin{equation}
  \label{eq:taut-qbp}
  \lambda = a^{*}\de a + x^{*}\de x + y^{*} \de y\,.
\end{equation}
The corresponding symplectic form is
\[ \omega = \de a^{*}\de a + \de x^{*}\de x + \de y^{*}\de y\,. \]
The associated map \(\app{\omega^{\sharp}}{\DR^{1}(\ol{Q})}{\Der(\ol{Q})}\)
acts on a generic 1-form \(\alpha = \sum_{c\in \ol{Q}} S_{c}\de c\) in the
following manner:
\begin{equation}
  \label{eq:om-sh-gh}
  \omega^{\sharp}(\alpha) = -S_{a^{*}}\partial_{a} + S_{a}\partial_{a^{*}}
  - S_{x^{*}}\partial_{x} + S_{x}\partial_{x^{*}}
  - S_{y^{*}}\partial_{y} + S_{y}\partial_{y^{*}}\,.
\end{equation}
This symplectic form will provide our first Poisson bivector on \(\bb{k}\ol{Q}\),
\[ \pi_{0} = [\partial_{a^{*}},\partial_{a}] + [\partial_{x^{*}},\partial_{x}] +
[\partial_{y^{*}},\partial_{y}]\,. \]
Now let us consider the 1-form
\[ \lambda' = a^{*}a\de a + x^{*}a\de x - ya\de y^{*}\,. \]
Notice that \(\lambda'\) cannot be expressed as a deformation of the 1-form
\eqref{eq:taut-qbp} via a regular endomorphism of \(\Der(Q)\) because of the
term involving \(\de y^{*}\). It is, however, a rather natural extension of
the 1-form \eqref{eq:taut-cm} to the new setting.

Let us continue anyway along the same track by defining an endomorphism
\(\app{N}{\Der(\ol{Q})}{\Der(\ol{Q})}\) using equation \eqref{eq:def-lift},
where the role of \(\omega_{L}\) is now played by the 2-form
\[ \de\lambda' = \de a^{*}\, a\de a + a^{*}\de a\de a + \de x^{*}\, a\de x +
x^{*}\de a\de x - \de y\, a\de y^{*} - y\de a\de y^{*}\,. \]
By contracting with a generic derivation \(\theta\) we get
\begin{multline*}
  i_{\theta}(\de\lambda') = 
  \left(\theta(a^{*}) a + [a^{*},\theta(a)] - \theta(x)x^{*} + \theta(y^{*})y\right)\de a
  - a\theta(a)\de a^{*} + \\
  + \left(\theta(x^{*})a + x^{*}\theta(a)\right)\de x
  - a\theta(x)\de x^{*}
  + a\theta(y^{*})\de y
  - \left(\theta(y)a + y\theta(a)\right)\de y^{*}
\end{multline*}
so that, using \eqref{eq:om-sh-gh}
\begin{multline}
  \label{eq:Ngh}
  N(\theta) = a\theta(a)\partial_{a} + \left(\theta(a^{*}) a +
    [a^{*},\theta(a)] - \theta(x)x^{*} + \theta(y^{*})y\right)\partial_{a^{*}} +\\
  + a\theta(x)\partial_{x} + \left(\theta(x^{*})a + x^{*}\theta(a)\right)\partial_{x^{*}}
  + \left(\theta(y)a + y\theta(a)\right)\partial_{y} + a\theta(y^{*})\partial_{y^{*}}\,.
\end{multline}
Now let us consider the map \(\tilde{\pi}_{1}\deq N\circ \tilde{\pi}_{0}\).
Recalling that \(\tilde{\pi}_{0} = -\omega^{\sharp}\), we get
\begin{align*}
  \tilde{\pi}_{1}(\alpha) &= N(S_{a^{*}}\partial_{a} - S_{a}\partial_{a^{*}}
  + S_{x^{*}}\partial_{x} - S_{x}\partial_{x^{*}}
  + S_{y^{*}}\partial_{y} - S_{y}\partial_{y^{*}}) =\\
  &= aS_{a^{*}}\partial_{a} + (-S_{a}a + [a^{*},S_{a^{*}}] - S_{x^{*}}x^{*} -
  S_{y}y)\partial_{a^{*}} + \\
  &\phantom{=} + aS_{x^{*}}\partial_{x} + (-S_{x}a + x^{*}S_{a^{*}})\partial_{x^{*}}
    + (S_{y^{*}}a + yS_{a^{*}})\partial_{y} - aS_{y}\partial_{y^{*}} \,.
\end{align*}
The corresponding bivector in \(\mathcal{V}^{2}\ol{Q}\) is given by
\[ \pi_{1} = [a\partial_{a^{*}},\partial_{a}] + [a^{*}\partial_{a^{*}},\partial_{a^{*}}]
+ [a\partial_{x^{*}},\partial_{x}] + [x^{*}\partial_{a^{*}},\partial_{x^{*}}]
+ [\partial_{y^{*}},a\partial_{y}] + [y\partial_{a^{*}},\partial_{y}]\,. \]
A straightforward computation using definition \ref{schouten} reveals that
\[ [\pi_{1},\pi_{1}] = 0 \quad\text{ and }\quad [\pi_{0},\pi_{1}] = 0\,, \]
i.e. that \(\pi_{1}\) is a double Poisson structure on \(\ol{Q}\) which is
compatible with \(\pi_{0}\). A posteriori we can conclude, using theorem
\ref{compat-sympl}, that the endomorphism \(N\) defined by \eqref{eq:Ngh} is
Nijenhuis and compatible with \(\pi_{0}\) (in the sense that \(C_{(\pi_{0},N)}=0\)).

Let us take as Hamiltonians the necklace words in \(\DR^{0}(\ol{Q})\) of the
following form:
\begin{subequations}
  \label{eq:ham-gh-nc}
  \begin{equation}
    I_{k}\deq \frac{1}{k} a^{k} \quad (k\geq 1)
  \end{equation}
  and
  \begin{equation}
    I^{(2)}_{k}\deq a^{k}(xx^{*} + y^{*}y) \quad (k\geq 0)\,.
  \end{equation}
\end{subequations}
It is clear that the functions \(I_{k}\) are linked in a Lenard chain by the
two Poisson structures \(\pi_{0}\) and \(\pi_{1}\); the computation is
essentially the same as in the Calogero-Moser case. Here, however, the
additional Hamiltonians \(I^{(2)}_{k}\) determine another Lenard chain: in
fact we have
\[ \de I^{(2)}_{k} = \sum_{i=0}^{k-1} a^{i}(xx^{*} + y^{*}y)a^{k-1-i} \de a +
x^{*}a^{k} \de x + a^{k}x\de x^{*} + a^{k}y^{*} \de y + ya^{k}\de y^{*} \]
so that
\[ \tilde{\pi}_{1}(\de I^{(2)}_{k}) = (-\sum_{i=0}^{k-1} a^{i}(xx^{*} + y^{*}y)
a^{k-i} - a^{k}xx^{*} - a^{k}y^{*}y)\partial_{a^{*}} + a^{k+1}x\partial_{x} -
x^{*}a^{k+1}\partial_{x^{*}} + ya^{k+1}\partial_{y} - a^{k+1}y^{*}\partial_{y^{*}}\,, \]
which equals
\[ \tilde{\pi}_{0}(\de I^{(2)}_{k+1}) = - \sum_{i=0}^{k} a^{i}(xx^{*} +
y^{*}y)a^{k-i}\partial_{a^{*}} + a^{k+1}x\partial_{x} -
x^{*}a^{k+1}\partial_{x^{*}} + ya^{k+1}\partial_{y} - a^{k+1}y^{*}\partial_{y^{*}}\,. \]
We conclude that the noncommutative functions \eqref{eq:ham-gh-nc} induce a
bihamiltonian system on every representation space for the quiver \(\ol{Q}\).

To explain the relationship with the Gibbons-Hermsen system let us consider
the space of real representations of \(\ol{Q}\) with dimension vector
\(\mathbf{n} = (n,1)\),
\[ \Rep_{\bb{R}}(\ol{Q},(n,1)) = 
\Mat_{n\times n}(\bb{R})\oplus \Mat_{n\times n}(\bb{R})\oplus 
\Mat_{n\times 1}(\bb{R})\oplus \Mat_{n\times 1}(\bb{R})\oplus
\Mat_{1\times n}(\bb{R})\oplus \Mat_{1\times n}(\bb{R})\,, \]
the point corresponding to a representation \(\tau\) being given by the
matrices \((\tau_{a}, \tau_{a^{*}}, \tau_{x}, \tau_{y^{*}}, \tau_{x^{*}},
\tau_{y})\). As explained in \cite{bp08}, the phase space of the rank \(2\)
Gibbons-Hermsen system can be obtained by identifying
\(\Rep_{\bb{R}}(\ol{Q},(n,1))\) with the linear space 
\[ V_{n,2}\deq \Mat_{n\times n}(\bb{R})\oplus \Mat_{n\times n}(\bb{R})\oplus
\Mat_{n\times 2}(\bb{R})\oplus \Mat_{2\times n}(\bb{R}) \]
consisting of quadruples \((X,Y,v,w)\) using the bijective correspondence
defined as follows:
\begin{equation}
\label{eq:iso-gh}
X = \tau_{a} \qquad
Y = \tau_{a^{*}} \qquad
v =
\begin{pmatrix}
  -\tau_{x} & \tau_{y^{*}}
\end{pmatrix} \qquad
w =
\begin{pmatrix}
  \tau_{x^{*}}\\
  \tau_{y}
\end{pmatrix}\,.
\end{equation}
In this way the natural Hamiltonian action of the group \(G_{(n,1)}\simeq
\GL_{n}(\bb{R})\) on the symplectic manifold \((\Rep_{\bb{R}}(\ol{Q},(n,1)),
\hat{\omega})\) coincides with the Hamiltonian action of \(\GL_{n}(\bb{R})\)
on \(V_{n,2}\) used in \cite{gh84} to define the phase space of the system by
symplectic reduction.

The dynamics of the system is determined by taking as Hamiltonians the
functions\footnote{As in the previous subsection we modify the usual
  Hamiltonians by exchanging the matrices \(X\) and \(Y\).}
\begin{equation}
  \label{eq:ham-gh}
  \hat{I}_{k}(X,Y,v,w) = \frac{1}{k}\tr X^{k} \quad\text{ and }\quad
  \hat{H}_{k,\alpha}(X,Y,v,w) = \tr X^{k}v\alpha w\,,
\end{equation}
where \(\alpha\) is any \(2\times 2\) constant matrix (actually
\(\hat{I}_{k}\) is just a scalar multiple of \(\hat{H}_{k,\iota}\), where
\(\iota\) is the identity matrix). These functions span a (nonabelian) Poisson
algebra \(\mathcal{H}\) whose Poisson brackets are given by
\[ \{\hat{H}_{k,\alpha}, \hat{H}_{\ell,\beta}\} = \hat{H}_{k+\ell,[\alpha,\beta]}\,. \]
The complete integrability of the system then follows from the existence of
\(2n\)-dimensional abelian subalgebras of \(\mathcal{H}\). A natural choice is
to take the subalgebra spanned by the functions \((\hat{I}_{1}, \dots,
\hat{I}_{n})\) and \((\hat{I}^{(2)}_{0}, \dots, \hat{I}^{(2)}_{n-1})\) where
\[ \hat{I}^{(2)}_{k}\deq \hat{H}_{k,\eta}\,, \qquad
\eta =
\begin{pmatrix}
  -1 & 0\\
  0 & 1
\end{pmatrix}.
\]
Using the correspondence \eqref{eq:iso-gh} it is immediate to check that these
functions are induced, respectively, by the necklace words \(I_{k}\) and
\(I^{(2)}_{k}\) in \(\DR^{0}(\ol{Q})\) defined by \eqref{eq:ham-gh-nc}. It
follows that the dynamics described by these functions on \(V_{n,2}\simeq
\Rep_{\bb{R}}(\ol{Q},(n,1))\) is bihamiltonian with respect to the induced
Poisson structures \(\check{\pi}_{0}\) and \(\check{\pi}_{1}\).

Unfortunately this is not enough to conclude that the induced dynamics on the
quotient manifold is also bihamiltonian. In fact we are faced with exactly the
same problem that arose in the Calogero-Moser case: in order to reduce the
second Poisson structure we cannot use the projection map coming from the
symplectic reduction \emph{\`a\ la} Gibbons-Hermsen, as this map will not
preserve \(\pi_{1}\). Instead we have to devise an appropriate bihamiltonian
reduction scheme similar to the one set up in \cite{bfmop10} for the case
\(r=1\). Namely, we should reduce the bihamiltonian manifold consisting of the
linear space \(\Rep_{\bb{R}}(\ol{Q},(n,1))\) equipped with the two compatible
Poisson structures \(\check{\pi}_{0}\) and \(\check{\pi}_{1}\) to a suitable
\(4n\)-dimensional bihamiltonian manifold, which then must be identified with
the usual phase space of the rank 2 Gibbons-Hermsen system.

This is a non-trivial problem that we are not going to tackle here. However,
let us briefly sketch a possible way to construct such a quotient. Following
\cite{bfmop10} it is natural to look for a two-step projection,
\[ \Rep_{\bb{R}}(\ol{Q},(n,1)) \longrightarrow \mathcal{P} \longrightarrow
\bb{R}^{4n}\,, \]
where the first step involves the definition of a \((n^{2}+4n)\)-dimensional
slice \(\mathcal{P}\) for the action of \(\GL_{n}(\bb{R})\) on the \((2n^{2} +
4n)\)-dimensional space \(\Rep_{\bb{R}}(\ol{Q},(n,1))\). Once this slice has
been defined, the second projection may again be performed by the following procedure: 
\begin{enumerate}
\item[1)] one selects a set of \(4n\) regular \(\GL_{n}(\bb{R})\)-invariant
  functions on \(\Rep_{\bb{R}}(\ol{Q},(n,1))\) which span a Poisson subalgebra
  with respect to both brackets and whose Jacobian matrix with respect to the
  reduced Gibbons-Hermsen coordinates is nondegenerate;
\item[2)] one takes the submersion \(\mathcal{P}\to \bb{R}^{4n}\) whose
  components are given by those functions.
\end{enumerate}
Such a set of generators may consist, for example, of the \(2n\) Hamiltonians
\(\hat{I}_{k}\), \(\hat{I}^{(2)}_{k}\) considered before supplemented with the
\(n\) functions
\[ \hat{J}_{\ell}\deq \tr X^{\ell-1}Y \qquad (1\leq \ell\leq n)\,, \]
familiar from the Calogero-Moser case, and with the further \(n\) functions
\(\hat{J}^{(2)}_{0}, \dots, \hat{J}^{(2)}_{n-1}\), where
\[ \hat{J}^{(2)}_{\ell}\deq \hat{H}_{\ell,e_{12}}\,, \qquad
e_{12} =
\begin{pmatrix}
  0 & 1\\
  0 & 0
\end{pmatrix}\,, \]
which are needed to recover the additional degrees of freedom contained in the
matrices \(v\) and \(w\).

\section{Final remarks}
\label{sec:fin}

We believe that the formalism presented in this paper may be of help in
finding bihamiltonian structures for many other classical finite-dimensional
integrable systems. The most obvious case to be considered next is that of
Calogero-Moser systems with trigonometric/hyperbolic potentials and their
generalizations with internal degrees of freedom (see for instance the survey
\cite{nek99}). In this connection let us observe that, as pointed out by
Bielawski \cite[Remark 7.3]{bie13}, the Poisson bivector $\pi_1$ given in
eq.~\eqref{eq:pi1-cm} also induces, on a suitable open subset of
\(\Rep_{\bb{k}}(\ol{Q}_{\circ},n)\), the symplectic structure of the
trigonometric Calogero-Moser system. The compatibility between \(\pi_{1}\) and
the Poisson bivector \(\pi_{2}\) given in eq.~\eqref{eq:p2-cm} then suggests
that the bihamiltonian description of this system hinted at in \cite{aaj10}
may be derived from this pair of double Poisson structures.

Another promising source of examples may come from the very general class of
integrable systems arising from the Coulomb branch of the moduli space of
vacua in four-dimensional \(N=2\) supersymmetric gauge theories
\cite{sw94a,sw94b,dw96}. As the referee pointed out to us, many explicit
examples of systems of this kind have been derived, most recently by Dorey
and Zhao \cite{dz15}, starting from elliptic quiver gauge theories.
Interpreting these systems from a noncommutative-geometric point of view seems
to be an interesting problem.

The last issue we would like to mention is related to the notion of
\emph{duality} between integrable systems introduced by Ruijsenaars in
\cite{rui88} and later reinterpreted in terms of the symplectic reduction of
two families of commuting Hamiltonians on a higher-dimensional symplectic
manifold \cite{gn95,fgnr00}. Being a relation between canonical coordinates on
actual phase spaces, the Ruijsenaars duality transformation can be implemented
only at the level of symplectic quotients, and is thus invisible at the
noncommutative level. However, in many cases the data to be provided as input
for the construction (namely the ``big'' phase space, the symplectic form and
the two families of commuting Hamiltonians) can be interpreted in terms of
geometric objects on quiver representation spaces.

A relevant example is provided by the well known duality between the
trigonometric Calogero-Moser(-Sutherland) system and the rational
Ruijsenaars-Schneider system, which was put on a firm geometric basis by
Feh\'er and Klim\v{c}\'{\i}k \cite{fk09b}. The input data for their construction
seem to admit a noncommutative-geometric interpretation. If so, it should be
possible to derive a bihamiltonian structure for both systems starting from
the same noncommutative PN structure pointed out above (see again the related
computations in \cite{aaj10}).

\subsection*{Acknowledgments}

This work was partially supported by the PRIN ``Geometria delle variet\`a
algebriche'' and by the University of Genoa's research grant ``Aspetti
matematici della teoria dei campi interagenti e quantizzazione per
deformazione''. The authors are grateful to the referee, whose suggestions
have been incorporated in section~\ref{sec:fin}. A.T. would like to thank the
Department of Mathematics at the University of Genoa for the kind hospitality
during the period in which this paper was written.

\bigskip
\textsc{Dipartimento di Matematica, Universit\`a di Genova, via Dodecaneso 35, 16146 Genova, Italy}

\emph{Email addresses}: \texttt{bartocci@dima.unige.it}, \texttt{altacch@gmail.com}

\end{document}